\documentclass[11pt]{article}
\usepackage[letterpaper,margin=1in]{geometry}
\usepackage{graphicx} %
\usepackage{hyperref}
\usepackage{amsmath,amsthm,amssymb,amsfonts,thmtools,thm-restate}
\usepackage[ruled,linesnumbered,noend]{algorithm2e}
\usepackage{todonotes}
\usepackage{enumerate}

\hypersetup{colorlinks,allcolors=blue}
\usepackage{cleveref}

\allowdisplaybreaks

\title{A Fast and Simple $(1+\epsilon)$-Approximation for Minimum Spanning Trees in Doubling Metrics}
\author{Anonymous}
\author{Jan H\"ockendorff%
\thanks{Email: \texttt{hoeckendorff@cs.uni-koeln.de}.
Funded by the Deutsche Forschungsgemeinschaft (DFG, German Research Foundation) - Project Number 459420781.} \\
University of Cologne
\and 
Felix Hommelsheim%
\thanks{Email: \texttt{hommelsheim@cs.uni-koeln.de}.
Funded by the Deutsche Forschungsgemeinschaft (DFG, German Research Foundation) - Project Number 573939419.} \\
University of Cologne
\and  
Christian Sohler%
\thanks{Email: \texttt{csohler@uni-koeln.de}.} \\
University of Cologne 
\and 
Di Yue%
\thanks{Email: \texttt{diyue@cs.toronto.edu}.} \\
University of Toronto}

\date{\today}

\theoremstyle{plain}
\newtheorem{theorem}{Theorem}[section]
\newtheorem{lemma}[theorem]{Lemma}

\newtheorem{observation}[theorem]{Observation}

\theoremstyle{definition}
\newtheorem{definition}[theorem]{Definition}

\theoremstyle{remark}

\newtheorem*{remark*}{Remark}

\makeatletter
\AtBeginDocument{%
  \@ifundefined{theHtheorem}{%
  }{%
  }%
  \@ifundefined{theHlemma}{%
  }{%
  }%
  \@ifundefined{theHclaim}{%
  }{%
  }%
  \@ifundefined{theHcorollary}{%
  }{%
  }%
  \@ifundefined{theHquestion}{%
  }{%
  }%
  \@ifundefined{theHproposition}{%
  }{%
  }%
  \@ifundefined{theHfact}{%
  }{%
  }%
  \@ifundefined{theHobservation}{%
  }{%
  }%
  \@ifundefined{theHproperty}{%
  }{%
  }%
  \@ifundefined{theHdefinition}{%
  }{%
  }%
  \@ifundefined{theHexample}{%
  }{%
  }%
  \@ifundefined{theHproblem}{%
  }{%
  }%
  \@ifundefined{theHremark}{%
  }{%
  }%
}
\makeatother

\DeclareMathOperator{\ddim}{\operatorname{ddim}}
\DeclareMathOperator{\dist}{\mathbf{d}}
\DeclareMathOperator{\diam}{\operatorname{diam}}

\DeclareMathOperator*{\argmin}{argmin}
\DeclareMathOperator*{\argmax}{argmax}

\DeclareMathOperator{\wdeg}{wdeg}
\DeclareMathOperator{\lev}{lev}

\DeclareMathOperator{\MST}{MST}

\newcommand{\err}{\mathrm{err}}

\newcommand{\WSPC}{\mathrm{WSPC}}

\newcommand{\nets}{\mathcal{N}}
\newcommand{\child}{C}
\newcommand{\desc}{D}
\newcommand{\wspcball}[1]{B(#1, 2^{\lev(#1)})}

\newcommand{\ineqref}[2]{\mathrel{\overset{\textnormal{\tiny(#2)}}{\leq}}}

\begin{document}

\maketitle

\begin{abstract}
The minimum spanning tree (MST) problem is one of the most basic optimization problems on metric spaces and graphs.
We study the problem of computing a $(1+\epsilon)$-approximation to the MST of an $n$-point metric space $(X,\dist)$ of doubling dimension $\ddim$.
In doubling metrics, previous deterministic algorithms incur a running time with dependence $\epsilon^{-O(\ddim)}$.

We give a deterministic algorithm that computes a $(1+\epsilon)$-approximation to MST in time
$2^{O(\ddim)} n \bigl(\log n + \epsilon^{-1} \log^4(1/\epsilon)\bigr)$.
For bounded doubling dimension, this improves the previous dependence on $\epsilon$ from $\epsilon^{-O(\ddim)}$ to essentially linear in $\epsilon^{-1}$.
Moreover, as a special case, our result improves the previous best deterministic running time for bounded-dimensional Euclidean metrics due to Arya and Mount~[SODA'16] by almost a factor of $\epsilon^{-1}$.
We also show that, unlike in bounded-dimensional Euclidean spaces, MSTs in bounded doubling metrics can have arbitrarily large maximum degree, while every doubling metric nevertheless admits a $(1+\epsilon)$-approximate MST of maximum degree $2^{O(\ddim)}\log(1/\epsilon)$.
\end{abstract}

\thispagestyle{empty}
\newpage
\setcounter{page}{1}
\section{Introduction}
Computing a minimum spanning tree (MST) is one of the most fundamental problems in combinatorial optimization, with numerous applications.
In this paper we consider the following approximate version of the problem.
Given an $n$-point metric space $(X,\dist)$ and an error parameter $\epsilon > 0$, the objective is to compute a spanning tree on $X$ whose total weight is at most $(1+\epsilon)$ times the weight of the exact MST.
We are interested in the setting where the input metric has bounded doubling dimension, a standard notion of low-dimensionality for general metric spaces that includes bounded-dimensional Euclidean spaces as a special case.
Throughout, we let $\ddim$ denote the doubling dimension of $X$; a formal definition is given in \Cref{sec:preliminaries}.

A straightforward solution is to construct the complete weighted graph on $X$ and compute its MST, which takes quadratic time in the number of points.
In Euclidean spaces, much better exact algorithms are known.
In the plane, Shamos and Hoey showed that the Euclidean MST is a subgraph of the Delaunay triangulation, which yields an $O(n \log n)$-time algorithm~\cite{ShamosHoey75}.
For higher but constant dimension, the problem can be solved in subquadratic time~\cite{Yao82,AgarwalES91}, but the running times remain far from linear except in very low dimensions. Recently, Chan and Zheng \cite{CZ24} obtained an algorithm
with $O(n^{2\lceil d/2 \rceil / (\lceil d/2 \rceil -1)})$, i.e. $O(n^{4/3})$ running time for $d=3$ and $d=4$.

This has led to extensive work on approximation algorithms.
In Euclidean spaces, Vaidya's results on sparse spanners imply an $O(\epsilon^{-d} n \log n)$-time algorithm for computing a $(1+\epsilon)$-approximate MST~\cite{Vaidya91}.
Callahan and Kosaraju introduced the well-separated pair decomposition (WSPD) \cite{CallahanKosaraju95} and showed how to combine it with approximate bichromatic closest-pair computations to obtain a deterministic algorithm with running time
$O\!\left(n \log n + n \epsilon^{-d/2}\log(1/\epsilon)\right)$
for fixed dimension~\cite{CallahanK93}.
More recently, Arya and Chan reduced the dependence on $\epsilon$ further via improved approximate nearest-neighbor machinery, obtaining a randomized algorithm with better dimension dependence~\cite{AryaChan14}.
The previous best deterministic algorithm in bounded-dimensional Euclidean space, due to Arya and Mount, runs in time
$O\!\left(n \log n + n \epsilon^{-2}\log^2(1/\epsilon)\right)$
for fixed $d$~\cite{DBLP:conf/soda/AryaM16}.
In high dimensions one can obtain an $O(c)$-approximation by first computing a spanner with $O(n^{1+1/c^2} \log n \log c)$ edges \cite{HIS13} and then the spanning tree on this graph.

The more general setting of doubling metrics has also received considerable attention.
Low-dimensional metrics were studied systematically by Talwar~\cite{Talwar04}, and Har-Peled and Mendel developed near-linear-time hierarchical nets for doubling metrics and applied them to, among other things, WSPDs and spanner construction~\cite{DBLP:journals/siamcomp/Har-PeledM06}.
These tools yield deterministic $(1+\epsilon)$-approximation algorithms for MST in bounded doubling dimension with near-linear dependence on $n$, but with a dependence on $\epsilon$ of the form $\epsilon^{-O(\ddim)}$.
Bounded doubling dimension strictly generalizes bounded-dimensional Euclidean spaces and includes a number of other natural metric classes.
In particular, every fixed-dimensional normed space, such as $\ell_1^d$, $\ell_2^d$, and $\ell_\infty^d$, has bounded doubling dimension, and the class also contains more general metric spaces of low intrinsic dimension, such as finite subsets of low-dimensional manifolds and other low-growth metrics.

Another line of work considers approximating only the \emph{weight} of the MST.
This direction originated with the graph algorithm of Chazelle, Rubinfeld, and Trevisan~\cite{ChazelleRubinfeldTrevisan05} (see also \cite{PB25}).
Czumaj et al.\ adapted these ideas to Euclidean space~\cite{CzumajEtAl05}, and Czumaj and Sohler gave a sublinear-time algorithm for approximating the MST weight in arbitrary metric spaces, assuming access to a distance oracle~\cite{CzumajSohler09}.
These results are incomparable to ours: they estimate only the weight, whereas our goal is to compute an explicit spanning tree.

We work in the standard distance-oracle / real-RAM model, as in prior work on approximate MSTs in low-dimensional Euclidean spaces, including~\cite{DBLP:conf/soda/AryaM16}.
In particular, distances can be queried in constant time, and in the Euclidean setting we allow constant-time arithmetic operations on point coordinates.

\paragraph{Our contribution.}
We give a deterministic algorithm that computes a $(1+\epsilon)$-approximation to MST in an $n$-point metric space of doubling dimension $\ddim$ in time $2^{O(\ddim)} n \bigl(\log n + \epsilon^{-1}\log^4(1/\epsilon)\bigr)$.
For bounded doubling dimension, this improves the previous $\epsilon^{-O(\ddim)}$ dependence to essentially linear in $\epsilon^{-1}$.
As a special case, it also improves the previous best deterministic bound in bounded-dimensional Euclidean spaces due to Arya and Mount~\cite{DBLP:conf/soda/AryaM16} by almost a factor of $1/\epsilon$.

\begin{restatable}{theorem}{thmmain}
	\label{thm:MST_alg_improved}
	There is a deterministic algorithm that, given as input an $n$-point metric space $(X, \dist)$ with doubling dimension $\ddim$, computes a $(1 + \epsilon)$-approximation to MST in time\\
    $2^{O(\ddim)} n (\log n + \epsilon^{-1} \log^4(1/\epsilon))$.
\end{restatable}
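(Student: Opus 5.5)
We build a net tree, use it to run a Borůvka/Kruskal-type procedure level by level, and control the total error with a charging argument. First, in $2^{O(\ddim)} n\log n$ time we construct the hierarchical net tree of Har-Peled and Mendel~\cite{DBLP:journals/siamcomp/Har-PeledM06}: nested nets $\nets_i$ at scales $2^i$, each point of $\nets_i$ with a parent in $\nets_{i+1}$, and each node of level $i$ equipped with its $2^{O(\ddim)}$ neighbors within distance $O(2^i)$. Such a node's descendants lie in $B(p,2^{i+1})$. The naive route would be to take a WSPD with separation $1/\epsilon$ from this tree and connect each pair by a representative edge. That gives a $(1+\epsilon)$-spanner whose MST is $(1+\epsilon)$-approximate, but it has $\epsilon^{-O(\ddim)}n$ pairs. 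We therefore never build pairs at separation $1/\epsilon$. We use only a constant-separation WSPC ($2^{O(\ddim)}n$ pairs) and refine a pair only when it can actually contribute to the MST.

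The algorithm processes scales in increasing order, maintaining Kruskal-style components as a union--find structure over the net-tree nodes. At scale $2^i$ it examines each constant-separated pair $(u,v)$ with $\lev(u)=\lev(v)=i$ and $\dist(u,v)=\Theta(2^i)$. If $u$ and $v$ are already in the same component, the pair is discarded. Otherwise we need an edge of length at most $\dist_{\min}(u,v)+\epsilon 2^i$. We take it to be a $(1+\epsilon)$-approximate bichromatic closest pair between the descendant sets of $u$ and $v$, with additive error $\epsilon 2^i$. This closest pair is found by descending both subtrees simultaneously for $\log(1/\epsilon)$ levels, pruning subpairs whose lower bound exceeds the current best by more than $\epsilon 2^i$. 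The error analysis is standard: the resulting graph $H$ contains, for every MST edge $(a,b)$ at scale $2^i$, either a connection already present below that scale or an edge of weight at most $\dist(a,b)+O(\epsilon)2^i\le (1+O(\epsilon))\dist(a,b)$. Kruskal's cut argument then bounds the weight of the MST of $H$ by $(1+O(\epsilon))\,\MST(X)$.

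The hard step is the running time: naive pruning for a single pair can visit $\epsilon^{-O(\ddim)}$ subpairs. To get near-linear dependence on $1/\epsilon$, we would charge the work at a pair $(u,v)$ to the \emph{boundary} structure. At depth $j$ below level $i$, only subnodes within distance $\dist_{\min}+O(2^{i-j})$ of the other side survive pruning. We would argue, via a packing argument in the doubling metric, that the surviving subnodes can be organized so that their number is $2^{O(\ddim)}\cdot 2^{j}$ rather than $2^{O(\ddim) j}$. This is the delicate part. Our plan is to restrict the descent to the surviving nodes closest to the current candidate point: in each round we locate the child nearest to the current best point on the other side, and we retain only those within an additive $2^{i-j}$ window. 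The count should then grow like the number of scales times a constant, rather than like a volume. Summed over $j\le\log(1/\epsilon)$, and with one factor $\log(1/\epsilon)$ for the alternating refinement, this would give $2^{O(\ddim)}\epsilon^{-1}\mathrm{polylog}(1/\epsilon)$ work per pair. I expect the $\log^4$ to come from nested binary searches over scale and over the additive threshold.

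Finally, the total cost is $2^{O(\ddim)}n$ pairs times this per-pair cost, plus $2^{O(\ddim)} n\log n$ for building the net tree, plus near-linear time for union--find and for sorting the candidate edges by scale. This matches the bound of \Cref{thm:MST_alg_improved}. The main obstacle I anticipate is justifying the per-pair bound $\epsilon^{-1}\mathrm{polylog}(1/\epsilon)$ instead of $\epsilon^{-O(\ddim)}$. If the packing argument fails in general doubling metrics, the fallback is an amortized argument. It would use the fact that MST edges found at scale $2^i$ merge components, so that expensive refinements are charged to merges, of which there are at most $n-1$.
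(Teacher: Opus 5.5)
There is a genuine gap, and it sits in the step you yourself flag as delicate. That step carries the whole theorem.

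\textbf{Why the per-pair search cannot work.} Your running-time argument relies on the claim that only $2^{O(\ddim)}2^j$ subnodes survive pruning at depth $j$. This already fails in $\mathbb{R}^d$. Let $V=\{q\}$ and let the points of $U$ lie on the sphere of radius $D$ around $q$, forming a nearly flat $(d-1)$-dimensional cap of diameter $\Theta(2^i)$. Every depth-$j$ subnode then has lower bound $D-O(2^{i-j})$ and cannot be pruned until $2^{i-j}=O(\epsilon 2^i)$. So $\Theta(2^{j(d-1)})$ nodes survive and the search costs $\epsilon^{-\Omega(\ddim)}$, which is exactly the dependence the theorem avoids. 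Keeping only the nodes ``closest to the current candidate'' changes nothing, since all are equally close.

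In general doubling metrics, no search strategy can meet your per-pair guarantee (an edge of length at most $\dist(U,V)+\epsilon 2^i$ for every undiscarded pair) in $\epsilon^{-1}\mathrm{polylog}(1/\epsilon)$ time. Take two small, far-apart clusters $U,V$ that form a single pair at scale $2^i$. Make each a $2\epsilon 2^i$-packing of size $m=\epsilon^{-\Theta(\ddim)}$. Set all cross distances to $D$ except one hidden pair at $D-2\epsilon 2^i$. This is a metric of doubling dimension $O(\ddim)$, and nothing else joins $U$ to $V$, so the pair is not discarded. An adversary answering $D$ to every cross query then forces $\Omega(m^2)$ distance queries. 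Far-apart copies of this gadget push the total cost to $n\,\epsilon^{-\Omega(\ddim)}$, so charging refinements to the $n-1$ merges does not rescue the bound either.

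Separately, the discard rule is unsound as stated. A scale spans a constant-factor range of distances, so an earlier connection from the same or previous scale can be a constant factor heavier than the MST edge whose pair you then drop. Your error argument gives no $(1+O(\epsilon))$ bound on that connection.

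\textbf{The missing idea.} Per-pair accuracy is unnecessary, and the gadget shows why. There $\MST(X)\ge (m-1)\,2\epsilon 2^i$, so once $m\ge\epsilon^{-2}$ even joining the two centers, with error $O(2^i)$, costs only $O(\epsilon)\MST(X)$.

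The paper makes this rigorous with \Cref{lemma:net_size_MST}. Refinement of $(u,v)$ stops as soon as a descendant set would exceed $\gamma=\epsilon^{-1}\log(1/\epsilon)$. The packing lower bound then shows that the additive error $O(2^{\lev(u)-k})$ is at most $O(1/\gamma)$ times the MST weight in $B(u,2^{\lev(u)+4})$ or $B(v,2^{\lev(v)+4})$. Each MST edge meets such balls for only $2^{O(\ddim)}$ net points per level. Hence the $O(\log(1/\epsilon))$ middle levels contribute $2^{O(\ddim)}\epsilon\,\MST(X)$ error, at cost $2^{O(\ddim)}n\gamma\,\mathrm{polylog}(1/\epsilon)$ per level.

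To avoid paying this on all $\log n$ levels, the paper picks $\ell^*=\argmax_\ell 2^{\ell/2}|N_\ell|$ and treats the other levels differently:
\begin{itemize}
\item Below $\ell^*-10\log(1/\epsilon)$ it only connects centers. The total error is $2^{O(\ddim)}\epsilon^5\MST(X)$, using $2^{\ell^*}|N_{\ell^*}|=O(\MST(X))$.
\item Above $\ell^*+10\log(1/\epsilon)$ it uses thresholds $\tau_\ell=1.1^{\ell-\ell^*}$. The cost $\tau_\ell^2|N_\ell|\le\tau_\ell^2\, 2^{(\ell^*-\ell)/2}|N_{\ell^*}|$ decays geometrically, while the error sum $\sum_\ell 1/\tau_\ell$ is $O(\epsilon)$.
\end{itemize}
This charging of per-pair error to local MST weight, organized around $\ell^*$, is what yields the $2^{O(\ddim)}n(\log n+\epsilon^{-1}\log^4(1/\epsilon))$ bound. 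Your plan has no substitute for it.
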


At a high level, our algorithm follows the classical well-separated-pair approach to MST.
Instead of working with a well-separated pair decomposition (WSPD), however, we use a well-separated pair cover (WSPC), which can be viewed as a relaxed analogue of a WSPD: every relevant pair of points is still covered by a well-separated pair of clusters, but the same pair of points may now be covered more than once.
For each pair $(U,V)$ in the cover, we add a representative edge that approximates the closest pair between $U$ and $V$.
The main challenge is therefore to approximate these closest pairs quickly enough.

Our algorithm is guided directly by a lower-bound argument based on the nested net hierarchy.
For every level of the hierarchy, the corresponding net induces a lower bound on $\MST(X)$:
indeed, by the connection between minimum spanning trees and Steiner trees, the MST of the net is at most twice the MST of the original metric, and hence the size and scale of any net certify a lower bound on $\MST(X)$.
Among all levels, we then identify the level $\ell^*$ that yields the strongest such bound.
This level $\ell^*$ also determines the algorithmic strategy.
Above $\ell^*$, the number of net points must decrease geometrically, and therefore we can afford to spend geometrically increasing computational effort on higher levels while keeping the total running time near-linear.
Significantly below~$\ell^*$, the relevant distances are so small compared to the lower bound at level $\ell^*$ that even a crude constant-factor approximation contributes only negligible total error.
Only for the intermediate range of levels do we need a more accurate closest-pair routine; here we use the approach of Arya and Mount \cite{DBLP:conf/soda/AryaM16}.
Since there are only $O(\log(1/\epsilon))$ such middle levels, this part can also be handled efficiently.
Thus, the lower-bound argument and the algorithmic design are tightly coupled: the same multiscale structure that certifies the value of the optimum also dictates how accurately each level must be processed.

While our algorithm is inspired by the Euclidean work of Arya and Mount~\cite{DBLP:conf/soda/AryaM16}, the extension to doubling metrics is not entirely black-box.
Their paper contains two analyses: a simpler one with running time $O(n\epsilon^{-2}\log^2(n/\epsilon))$, and a more refined one achieving $O\!\left(n \log n + n \epsilon^{-2}\log^2(1/\epsilon)\right)$.
The simpler analysis, combined with some additional insights, can be adapted to bounded doubling metrics.
By contrast, the stronger analysis makes essential use of Euclidean geometry and does not appear to extend to the more general doubling-metric setting.
Indeed, some structural properties available in bounded-dimensional Euclidean spaces fail completely in doubling metrics.
For example, whereas MSTs in bounded-dimensional Euclidean spaces have constant maximum degree, we show that this fails dramatically in bounded doubling metrics: there exist such metrics whose MST has maximum degree $\Omega(n)$; this is true even for high-dimensional Euclidean spaces with bounded doubling dimension.

\begin{restatable}{proposition}{lemmalowerbounddegree}
	\label{lemma:MST_degree_lower_bound}
For every integer $n$ there exists a finite point set
$X \subset \mathbb{R}^{2n}$ equipped with the Euclidean distance
$\dist$ such that the metric space $(X,\dist)$ has bounded doubling
dimension, i.e., $\ddim(X)=O(1)$, while the minimum spanning tree of
$X$ has maximum degree $\Omega(n)$.
In particular, even in high-dimensional Euclidean spaces of bounded
doubling dimension the maximum degree of a minimum spanning tree
is unbounded.
\end{restatable}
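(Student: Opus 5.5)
The plan is to build an explicit star whose leaves lie on mutually orthogonal axes at geometrically increasing distances from a common center. I will use only the first $n$ coordinates of $\mathbb{R}^{2n}$; the remaining ones are set to zero. Let $e_1,\dots,e_{2n}$ be the standard basis and set
\[
X=\{0\}\cup\{p_i : 1\le i\le n\},\qquad p_i = 2^i e_i .
\]
Then $\dist(0,p_i)=2^i$ and, for $i\neq j$,
\[
\dist(p_i,p_j)=\sqrt{4^i+4^j}>2^{\max(i,j)} .
\]
The proof has two steps: first show that the MST is the star centered at $0$, then show that $\ddim(X)=O(1)$.

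\emph{MST is the star.} For each $i$, consider the edges incident to $p_i$. The edge $\{0,p_i\}$ has weight $2^i$, and every other incident edge $\{p_i,p_j\}$ has weight $\sqrt{4^i+4^j}>2^i$. So $\{0,p_i\}$ is the unique lightest edge across the cut $(\{p_i\},X\setminus\{p_i\})$. By the cut property, it belongs to every MST. These $n$ edges already form a spanning tree, so the MST is exactly the star. Its center $0$ has degree $n=\Omega(n)$.

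\emph{Bounded doubling dimension.} Fix a ball $B(x,r)$ with $x\in X$. I will cover $B(x,r)\cap X$ by a constant number of balls of radius $r/2$ centered at points of $X$.

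First, every $p_i$ with $2^i\le r/2$ lies in $B(0,r/2)$, so one ball covers all such points. It remains to handle the points $p_i\in B(x,r)$ with $2^i>r/2$.

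There are two cases, according to the center.
\begin{itemize}
\item If $x=0$, such points satisfy $2^i\le r$. Together with $2^i>r/2$, this leaves a single index $i$.
\item If $x=p_k$, any other $p_i$ in the ball satisfies $2^{\max(i,k)}<\dist(p_i,p_k)\le r$. Hence again $r/2<2^i\le r$, which allows at most one such $i$, in addition to $x$ itself.
\end{itemize}
Each of these leftover points is covered by a ball of radius $r/2$ around itself. In total at most $3$ or $4$ half-radius balls suffice, so the doubling constant is $O(1)$ and $\ddim(X)=O(1)$.

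The only delicate step is the doubling argument: one must check that orthogonality forces any two distinct leaves to be at distance at least the larger of their radii. This is what rules out many far-apart points inside a single ball. The covering must also be handled with care when the center of the ball is a leaf rather than the origin.
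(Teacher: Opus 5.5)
Your proof is correct and follows the same idea as the paper. In both, leaves sit on mutually orthogonal axes at distances $2^i$ from the origin, the cut property forces the star (each leaf's unique nearest neighbour is the origin), and bounded doubling dimension holds because only one scale $i$ with $r/2 < 2^i \le r$ can contribute points outside $B(0,r/2)$. The differences are cosmetic: you use one point per scale (degree $n$) instead of the paper's four points $\pm 2^i e_{2i-1}, \pm 2^i e_{2i}$ (degree $4n$), and your covering argument treats arbitrary radii and leaf-centred balls explicitly, which makes it more complete than the paper's, since the paper only spells out balls $B(o,2^k)$ and says other centres are similar.
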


We further show that this phenomenon is nevertheless controlled in a weaker sense: the degree of any vertex $p$ in an MST is at most $2^{O(\ddim)} \log \Delta$, where $\Delta$ is the aspect ratio of distances from~$p$.
However, we show that every doubling metric admits a $(1+\epsilon)$-approximate MST of maximum degree $2^{O(\ddim)} \log(1/\epsilon)$.
These structural results may be of independent interest.

\begin{restatable}{proposition}{lemmaapxupperbounddegree}
    \label{lemma:apx_MST_degree}
    Let $(X, \dist)$ be a finite metric space with doubling dimension $\ddim$ and $\epsilon \in (0, 1)$.
    Then there exists a spanning tree $T$ of $X$, such that 
    \begin{enumerate}[(a)]
        \item The maximum degree of $T$ is $2^{O(\ddim)} \log(1/\epsilon)$, and 
        \item $w(T) \leq (1 + \epsilon) \MST(X)$.
    \end{enumerate}
\end{restatable}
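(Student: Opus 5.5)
We plan to start from an exact MST $T^*$ of $X$ and modify it locally, vertex by vertex, so that every vertex ends up with degree $2^{O(\ddim)}\log(1/\epsilon)$. The local modifications should increase the weight by at most an $\epsilon$ fraction of the weight of the edges they touch.

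The structural fact we use is the degree bound mentioned just before the statement, proved by the standard packing argument. Fix $p$ and group its MST neighbours into distance scales $[2^i,2^{i+1})$. Two neighbours $q,q'$ at the same scale satisfy $\dist(q,q')\ge \max(\dist(p,q),\dist(p,q'))$, since otherwise the cycle property lets us swap an edge. So the neighbours at one scale form a $2^i$-separated set inside a ball of radius $2^{i+1}$, and there are at most $2^{O(\ddim)}$ of them. Hence $p$ has $2^{O(\ddim)}$ neighbours per scale, and a large degree can only arise from neighbours spread over many scales.

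The construction proceeds as follows. Root $T^*$ arbitrarily, and for each vertex $p$ consider its children $q_1,\dots,q_k$, ordered by increasing edge length $\dist(p,q_j)$. Set $L=\lceil\log_2(1/\epsilon)\rceil+c$. We keep the edge $pq_j$ for the $2^{O(\ddim)}L$ children whose lengths lie in the top $L$ occupied scales among $p$'s child edges. Every child at a scale $i$ more than $L$ below its nearest kept scale is instead re-hung from the closest kept child that lies at a larger scale; alternatively, we chain children scale by scale. The natural choice is to re-hang a child $q$ of scale $i$ onto the child $q'$ of the next occupied larger scale $i'$ with $i'\ge i+L$. The new edge has length
\[
\dist(q,q')\le \dist(p,q)+\dist(p,q')\le 2^{i+1}+2^{i'+1}.
\]
This is not small relative to $\dist(p,q)$, so this naive re-hanging fails.

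The better approach is to re-hang in the opposite direction, which gives the cost bound. If $q'$ is short, at scale $i'\le i-L$ (an $\epsilon$ fraction of $\dist(p,q)$), we replace the edge $pq$ with $q'q$. This costs at most $\dist(p,q')\le \epsilon\,\dist(p,q)$ extra, charged to the removed edge $pq$. The resulting procedure at each $p$ is then: bucket the child edges by scale, and keep at $p$ only the edges in scale buckets having no occupied bucket at least $L$ scales below them. Every other long edge $pq$ is redirected to a representative child $q'$ at a scale at least $L$ lower, and each representative receives at most $2^{O(\ddim)}$ redirected edges per scale.

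The main obstacle is controlling degrees after redirection. A vertex $q'$ can receive redirected edges from many scales, and this can cascade into new high degree at $q'$. To handle this, we redirect each long edge to the representative of the bucket exactly $L$ scales below it, or the nearest occupied bucket below that, and chain the buckets as $B_i\to B_{i-L}\to B_{i-2L}\to\cdots$. Each vertex then receives edges from $2^{O(\ddim)}$ buckets, namely one bucket above it per residue class. Each long edge is charged only once, with cost factor at most $1+2^{-L+1}\le1+\epsilon$.

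The resulting graph is still a spanning tree, because each redirection replaces $pq$ with $q'q$ where $q'$ stays connected to $p$ and is not in the subtree of $q$. Degrees become $2^{O(\ddim)}L$ from the kept buckets at $p$, plus the $2^{O(\ddim)}$ redirected edges and original children gained at $p$ in its role as a child. Finally, we verify that this redirection, applied at every vertex simultaneously, does not interact badly; since redirections at $p$ only involve $p$'s own children, it does not.
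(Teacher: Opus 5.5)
Your final construction is correct once you drop the abandoned ``keep the top $L$ scales'' attempt. It takes a genuinely different route from the paper's.

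The paper processes only heavy vertices, bottom-up. At each it keeps the \emph{longest} $\log(1/\epsilon)$ scales, measured against the weighted degree $\wdeg_{T^*}(x)$. It replaces all edges at scales below $\epsilon\,\wdeg_{T^*}(x)$ by one star per scale, centered at some $c_\ell$, plus a path $x, c_1, c_2, \dots$. The added weight is a geometric sum bounded by $2^{O(\ddim)}\epsilon\,\wdeg_{T^*}(x)$, charged to the vertex's weighted degree. So the error carries a $2^{O(\ddim)}$ factor that must be absorbed by rescaling $\epsilon$. A processing order and an observation about when degrees change are also needed to control the degrees of the star centers.

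You instead keep the \emph{shortest} $L$ occupied scales at every vertex. Every other child $q$ of $p$ gets a new parent: a sibling $q'$ with $\dist(p,q') < 2^{1-L}\dist(p,q)$. Each MST edge is thus replaced by at most one edge of length at most $(1+2^{1-L})$ times its own. The weight bound is therefore per edge and independent of $\ddim$, with no rescaling of $\epsilon$. All vertices can be processed simultaneously, and the doubling dimension enters only through the per-scale packing bound for degrees. What the paper's version buys in exchange is only that it touches heavy vertices alone and leaves the long edges at each vertex intact.

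Two points should be tightened.
\begin{enumerate}
\item \textbf{Redirected edges per representative.} The representative of an occupied bucket $t$ receives edges from the occupied buckets in $[\max(t', t+L),\, t'+L)$, where $t'$ is the next occupied bucket above $t$. That is up to $L$ buckets, one per residue class mod $L$ as you say, not $2^{O(\ddim)}$ buckets. So a representative can gain up to $2^{O(\ddim)}L$ redirected edges rather than $2^{O(\ddim)}$. This still fits the target bound $2^{O(\ddim)}\log(1/\epsilon)$.
\item \textbf{Tree property under simultaneous redirection.} Each non-root vertex's new parent is either its $T^*$-parent or a sibling whose edge to the common $T^*$-parent lies in a strictly lower bucket. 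Hence the pair (depth in $T^*$, bucket index of the edge to the $T^*$-parent) strictly decreases lexicographically along new-parent pointers. So there are no cycles, and with $n-1$ edges the result is a spanning tree. This replaces your informal ``not in the subtree of $q$'' remark.
\end{enumerate}
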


The remainder of the paper is organized as follows.
In \Cref{sec:preliminaries}, we review the background on doubling metrics and introduce the structural tools used in the paper, including net hierarchies and well-separated pair covers.
In \Cref{sec:main}, we present our main algorithm and analyze it.
Specifically, \Cref{sec:main:algo} describes the algorithm, \Cref{sec:main:analysis} proves its approximation guarantee, and \Cref{sec:main:time} establishes the running-time bound.

\section{Preliminaries}
\label{sec:preliminaries}

Consider a metric space $(X, \dist)$.
For a point $x \in X $ and $r > 0$, define the \emph{ball} centered at $x$ with radius $r$ to be $B(x, r) := \{y \in X  \colon \dist(x, y) \leq r \}$.
The \emph{$r$-neighborhood} of a subset $S \subseteq X$ is defined as $B(S, r) := \bigcup_{x \in S} B(x, r)$. 
Denote the \emph{diameter} of $S$ to be $\diam(S) := \max_{x, y \in S} \dist(x, y)$.
The \emph{aspect ratio} of $S$ is defined as the ratio between the largest and smallest interpoint distances of $S$, denoted as $\Delta(S) := \frac{\max_{x, y \in S}\dist(x, y)}{\min_{x, y \in S}\dist(x, y)}$.
Denote $\dist(x, S) := \min_{y \in S} \dist(x, y)$ as the \emph{distance} from $x$ to point set $S$, 
and $\dist(S, T) := \min_{x \in S, y \in T} \dist(x, y)$ as the distance between two sets $S$ and $T$.

\begin{definition}[Doubling dimension~\cite{GuptaKL03}]
\label{def:ddim}
The \emph{doubling dimension} of a metric space $(X, \dist)$ 
is the smallest $t\geq0$ such that every metric ball can be covered by at most $2^t$ balls of half the radius.
The doubling dimension of a point set $S\subseteq X$ is 
the doubling dimension of the metric space $(S, \dist)$,
and is denoted by $\ddim(S)$.
\end{definition}

\begin{definition}[Packing, covering and net]
    Consider a metric space $(X, \dist)$ and a subset $S \subseteq X$.
    For $\rho > 0$, $S$ is \emph{$\rho$-packing} if for all $u, v \in S$, $\dist(u, v) \geq \rho$.
    $S$ is $\rho$-covering for $X$ if for every $x \in X $, there exists $u \in S$ such that $\dist(x, u) \leq \rho$.
    $S$ is called a $\rho$-net of $X$ if it is both $\rho$-packing and $\rho$-covering for $X $.
\end{definition}

\begin{lemma}[Packing Property~\cite{GuptaKL03}]
    \label{lemma:packing}
    If $S$ is $\rho$-packing, then $|S| \leq (\diam(S) / \rho)^{O(\ddim(S))}$.
\end{lemma}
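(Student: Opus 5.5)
We prove the bound by the standard iterated-halving argument behind the definition of doubling dimension. Put $t := \ddim(S)$, $D := \diam(S)$, and fix any point $x \in S$. Then $S \subseteq B_S(x, D)$, where the ball is taken inside the metric space $(S,\dist)$; this ensures the doubling constant used is that of $S$ itself. The case $\rho > D$ is trivial: any two distinct points of $S$ are at distance at most $D < \rho$, so $|S| \le 1$. We may therefore assume $\rho \le D$.

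\textbf{Step 1: cover $S$ by small balls.} We apply \Cref{def:ddim} repeatedly. First cover $B_S(x,D)$ by at most $2^t$ balls of radius $D/2$. Then cover each of these by at most $2^t$ balls of radius $D/4$, and continue in this way. After $k$ rounds, $S$ is covered by at most $2^{tk}$ balls of radius $D/2^k$. We stop at the smallest $k$ with $D/2^k < \rho/2$, which gives $k = \lceil \log_2(2D/\rho) \rceil + O(1)$.

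\textbf{Step 2: each small ball contains at most one point of $S$.} Take a ball of radius $r < \rho/2$ that occurs in the cover. If $u,v \in S$ both lie in it, the triangle inequality gives $\dist(u,v) \le 2r < \rho$. Since $S$ is $\rho$-packing, this forces $u=v$.

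\textbf{Step 3: conclude.} Combining the two steps,
\[
|S| \le 2^{tk} = 2^{t(\log_2(D/\rho) + O(1))} = 2^{O(t)}\,(D/\rho)^{t}.
\]
Because $\rho \le D$, this is at most $(2D/\rho)^{O(t)}$, which is $(\diam(S)/\rho)^{O(\ddim(S))}$ once $D/\rho \ge 2$.

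\textbf{The boundary case.} The one step needing care is $1 \le D/\rho < 2$. Here the constant $2^{O(t)}$ cannot be absorbed into $(D/\rho)^{O(t)}$, because that expression can be close to $1$. The bound should then be read as $2^{O(\ddim)}$, which is the usual convention for this lemma. Equivalently, one states it as $|S| \le \lceil 2D/\rho\rceil^{O(\ddim(S))}$. In every subsequent application the ratio $D/\rho$ is at least a constant, so this reading is harmless. Apart from this, the argument is routine.
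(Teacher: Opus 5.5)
The paper gives no proof of this lemma; it cites Gupta--Krauthgamer--Lee. Your iterated-halving argument is the standard proof from that source and is correct: it yields $|S| \le 2^{tk} \le (4\diam(S)/\rho)^{\ddim(S)}$. Your boundary caveat is also right, since the literal form fails when $\diam(S)/\rho$ is near $1$. For example, $m$ equidistant points have $\ddim = \log_2 m$ while $(\diam/\rho)^{O(\ddim)} = 1$, so the lemma should be read as $|S| \le (4\diam(S)/\rho)^{\ddim(S)}$, which is what every later application in the paper actually uses.
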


\subsection{Net Hierarchy}
\label{sec:net_hierarchy}
In this section we define net hierarchies that are the main data structure utilized in our algorithm. This data structure is very similar to existing work like the navigating nets construction by  Krauthgamer and Lee \cite{KrauthgamerL04} or the net tree construction by Har Peled and  Mendel \cite{DBLP:journals/siamcomp/Har-PeledM06}. We still give a precise definition of the data structure since it differs in details from related work and we are not aware of results in prior work that have the exact properties and model of computation needed for our purposes and could be used as black box.

Let $(X, \dist)$ be a metric space with doubling dimension $\ddim$.
Without loss of generality, assume the minimum interpoint distance of $X$ is $1$ and that the diameter of $X$ is $\Delta $.
Let $L = \lceil \log \Delta \rceil + 4 = O(\log \Delta)$.
Construct a sequence of \emph{nested nets} on $
X = N_0 \supseteq N_1 \dots \supseteq N_L, 
$
such that 
\begin{align}
    \forall 0 \leq \ell \leq L, \qquad N_\ell \text{ is } 2^{\ell - 3}\text{-packing and } 2^{\ell - 2}\text{-covering for } X.
    \label{eqn:N_l_properties}
\end{align}
Specifically, $N_0 = X$ and $N_L$ contains only one point in $X$.
The net hierarchy is defined to be the collection of these nets, i.e. $
    \nets := \{N_0, N_1, \dots, N_L\}.
$
It is clear to see the existence of such nested nets satisfying \eqref{eqn:N_l_properties}, 
by letting $N_0 = X$ and each $N_\ell$ be a $2^{\ell - 3}$-net of $N_{\ell - 1}$ for $\ell \geq 1$.

Elements in the data structure $\nets$ are represented as pairs $(u, \ell) \in X \times \{0, 1, \dots, L\}$, where $u \in N_\ell$ is a net point, and $\ell$ is called the \emph{level} of point $u$.
Note that if $(u, \ell)$ is an element in the net hierarchy, then $(u, \ell')$ is also an element for all $\ell' < \ell$, as the nets are nested.
When the context is clear we simply write $u$ as an element, denote its level by $\lev(u) = \ell$, and call $u$ a level-$\ell$ \emph{net point}.

\begin{definition}[Children]\label{def:children}
    Let $u, v$ be two net points of the net hierarchy.
    Say $v$ is a \emph{child} of $u$, if 
    (i) $\lev(v) = \lev(u) - 1$ and 
    (ii) $\dist(u, v) \leq 2^{\lev(u) + 1}$.
    Denote by $\child(u)$ the set of children of $u$.
\end{definition}

\begin{definition}[Descendants]\label{def:descendants}
    Let $u, v$ be two net points of the net hierarchy.
    Say $v$ is a \emph{descendant} of $u$, if either 
    (i) $v$ is a child of $u$, or
    (ii) there exists $w$, such that $w$ is a descendant of $u$ and $v$ is a child of $w$.
    For $\ell < \lev(u)$, denote by $\desc_\ell(u) \subseteq N_\ell$ the set of descendants of $u$ at level $\ell$, and by $\desc(u) := \bigcup_{\ell < \lev(u)} \desc_\ell(u)$ the set of descendants of $u$.
    In particular, $\desc_{\lev(u) - 1}(u) = \child(u)$.
\end{definition}

\begin{remark*}
    Given the definition of children and descendants, it might be tempting to think of the net hierarchy as a tree, where each node corresponds to a net point.
    However, we note that the net hierarchy does not have a tree structure, as different nodes may share the same child/descendant.
\end{remark*}

The following lemma shows that the descendants of a net point $u$ are both packing and covering for a local ball around $u$ of certain radius.

\begin{restatable}{lemma}{lemmadescpackingcovering}\label{lemma:descendants_packing_covering}
    Let $u \in N_\ell$ be a level-$\ell$ net point.
    Then for every $k < \ell$,
    \begin{enumerate}[(1)]
        \item \label{it:desc_packing}
        Packing: $\desc_k(u) \subseteq B(u, 2^{\ell + 2})$, and $\desc_k(u)$ is $2^{k - 3}$-packing for $B(u, 2^{\ell + 2})$.
        \item \label{it:desc_covering}
        Covering: for every $x \in B(u, 2^\ell)$, there exists $v \in \desc_k(u)$, such that $\dist(x, v) \leq 2^{k - 2}$.
    \end{enumerate}
\end{restatable}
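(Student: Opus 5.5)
Both parts go by induction on the depth $\ell - k$. The packing statement needs only a geometric sum of the child distances. The covering statement needs a chain of nets descending level by level from $u$.

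\textbf{Packing.} First bound how far a level-$k$ descendant can be from $u$. If $v \in \desc_k(u)$, there is a chain $u = w_\ell, w_{\ell-1}, \dots, w_k = v$ in which each $w_{j-1}$ is a child of $w_j$. By Definition~\ref{def:children}, $\dist(w_j, w_{j-1}) \le 2^{j+1}$. Hence
\[
\dist(u,v) \le \sum_{j=k+1}^{\ell} 2^{j+1} < 2^{\ell+2}.
\]
So $\desc_k(u) \subseteq B(u, 2^{\ell+2})$. The packing claim is then immediate: $\desc_k(u) \subseteq N_k$, and $N_k$ is $2^{k-3}$-packing by \eqref{eqn:N_l_properties}. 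Any subset of a $2^{k-3}$-packing set is again $2^{k-3}$-packing.

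\textbf{Covering.} For $x \in B(u, 2^\ell)$, I build the chain greedily. Set $w_\ell = u$. For $j = \ell, \ell-1, \dots, k+1$, let $w_{j-1}$ be a point of $N_{j-1}$ with $\dist(x, w_{j-1}) \le 2^{j-3}$; it exists because $N_{j-1}$ is $2^{j-3}$-covering. I then check that $w_{j-1}$ is a child of $w_j$:
\begin{itemize}
\item At the first step, $\dist(u, w_{\ell-1}) \le 2^\ell + 2^{\ell-3} \le 2^{\ell+1}$.
\item At later steps, $\dist(w_j, w_{j-1}) \le 2^{j-2} + 2^{j-3} \le 2^{j+1}$.
\end{itemize}
So each $w_{j-1}$ is a child of $w_j$, and therefore $v := w_k \in \desc_k(u)$ with $\dist(x, v) \le 2^{k-2}$, as required.

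\textbf{Main obstacle.} The only real subtlety is the first step of the covering chain. There the distance from $x$ to $u$ is as large as $2^\ell$, not $2^{\ell-2}$. The slack in the child radius $2^{\lev(u)+1}$ is exactly what absorbs this. The one remaining detail is that the nets are nested, so $w_j$ is also a level-$j$ net point. This lets the child relation be applied level by level along the chain.
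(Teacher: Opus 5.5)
Your proof is correct and follows essentially the paper's argument. For packing you use the same geometric sum of child radii along a chain $u=w_\ell,\dots,w_k=v$, and packing is inherited from $N_k$. For covering you exhibit an explicit chain of children from $u$ down to a level-$k$ net point near $x$.

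The only difference is the direction in which you build that chain. The paper first fixes $v\in N_k$ with $\dist(x,v)\le 2^{k-2}$ and climbs upward by taking nearest net points. It then needs a geometric sum of the accumulated drifts to show the top element is within $2^{\ell+1}$ of $u$. You instead descend from $u$ with every $w_j$ anchored at $x$. Each link is then bounded locally by $2^{j-2}+2^{j-3}$, and the slack in the child radius is needed only at the first link.
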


\begin{restatable}{lemma}{lemmanumchildrenparent}\label{lemma:num_children_parent}
    Let $u \in N_\ell$ be a level-$\ell$ net point.
    Then $|\child(u)| \leq 2^{O(\ddim)}$, and $u$ is the child of at most $2^{O(\ddim)}$ level-$(\ell + 1)$ net points.
\end{restatable}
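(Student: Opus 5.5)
Both claims reduce to the packing property (\Cref{lemma:packing}), applied to a set of net points that is packing at scale comparable to~$2^\ell$ and lies in a ball of radius comparable to~$2^\ell$.

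\textbf{Bound on $|\child(u)|$.} By \Cref{def:children}, every child $v$ of $u$ lies in $N_{\ell-1}$ and satisfies $\dist(u,v)\le 2^{\ell+1}$. Hence $\child(u)\subseteq N_{\ell-1}\cap B(u,2^{\ell+1})$. By \eqref{eqn:N_l_properties}, $N_{\ell-1}$ is $2^{\ell-4}$-packing, so $\child(u)$ is $2^{\ell-4}$-packing. Its diameter is at most $2^{\ell+2}$, so the ratio of diameter to packing radius is at most $2^6$. \Cref{lemma:packing} then gives $|\child(u)|\le (2^6)^{O(\ddim)}=2^{O(\ddim)}$. Equivalently, \Cref{lemma:descendants_packing_covering}\ref{it:desc_packing} with $k=\ell-1$ gives the same containment and packing directly.

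One subtlety is that \Cref{lemma:packing} is stated with $\ddim(S)$ rather than $\ddim(X)$. Doubling dimension of a subset is at most a constant times that of the ambient space, since $\ddim(S)\le 2\ddim(X)$. So $2^{O(\ddim(S))}=2^{O(\ddim)}$.

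\textbf{Bound on the number of parents.} If $w\in N_{\ell+1}$ has $u$ as a child, then $\dist(w,u)\le 2^{\ell+2}$. So all such $w$ lie in $N_{\ell+1}\cap B(u,2^{\ell+2})$. This set is $2^{\ell-2}$-packing and has diameter at most $2^{\ell+3}$, a ratio of $2^5$. \Cref{lemma:packing} again yields at most $2^{O(\ddim)}$ such parents.

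\textbf{Main obstacle.} The argument is routine. The only points needing care are getting the radius and packing constants right, which only affect the constant inside the $O(\ddim)$, and the remark above about passing from $\ddim(S)$ to $\ddim$.
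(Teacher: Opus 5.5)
Your proof is correct and matches the paper's argument: both bounds come from applying \Cref{lemma:packing} to $\child(u)\subseteq N_{\ell-1}\cap B(u,2^{\ell+1})$, which is $2^{\ell-4}$-packing (the paper obtains this via \Cref{lemma:descendants_packing_covering}(1)), and to the level-$(\ell+1)$ parents of $u$, which lie in $B(u,2^{\ell+2})$ and are $2^{\ell-2}$-packing. Your remark that $\ddim(S)\le 2\ddim(X)$ makes explicit a step the paper leaves implicit when it applies \Cref{lemma:packing} to subsets.
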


In the net hierarchy data structure $\nets$, each net point $u \in N_\ell$ maintains a set of pointers to all its children $\child(u) \subseteq N_{\ell - 1}$, for all $\ell \in \{0, 1, \dots, L\}$.
The data structure can be constructed in near-linear time, as summarized in the following \Cref{lemma:net_hierarchy_construction}.
We provide the proof of \Cref{lemma:net_hierarchy_construction} in \Cref{sec:appendix:net-hierarchy} for completeness.

\begin{restatable}{lemma}{lemmanethierarchyconstruction}\label{lemma:net_hierarchy_construction}
    There is a data structure that, given 
    an $n$-point metric space $(X, \dist)$ with doubling dimension $\ddim$, minimum interpoint distance $1$ and diameter $\Delta$,
    maintains the net hierarchy $\nets = \{N_0, N_1, \dots, N_L\}$ over $X$, such that 
    $X = N_0 \supseteq N_1 \dots \supseteq N_L$
    and each $N_\ell$ is $2^{\ell - 3}$-packing and $2^{\ell - 2}$-covering for $X$.
    Moreover, each net point $u \in N_\ell$ stores a set of pointers to all its children $\child(u) \subseteq N_{\ell - 1}$, for all $\ell \in \{0, 1, \dots, L\}$.
    The data structure can be constructed in time $2^{O(\ddim)} n \log \Delta$.
\end{restatable}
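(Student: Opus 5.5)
We build the hierarchy level by level, bottom-up. We maintain, for every net point $u \in N_\ell$, a list $R_\ell(u)$ of \emph{relatives}: all $v \in N_\ell$ with $\dist(u,v) \le c\cdot 2^{\ell}$ for a suitable constant $c$ (say $c = 16$). By \Cref{lemma:packing}, since $N_\ell$ is $2^{\ell-3}$-packing, each relative list has size $2^{O(\ddim)}$. The invariant is that after processing level $\ell$, we know $N_\ell$, all relative lists $R_\ell(\cdot)$, and all child pointers from level $\ell$ to level $\ell-1$. Initially $N_0 = X$ with minimum distance $1$.

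To get $R_0$ in the base case, we cannot afford all pairs, so we instead build top-down intuition differently. We run the construction from the top: $N_L$ is a single point, and its relative list is trivial. We then go from level $\ell$ to level $\ell-1$, maintaining for each point $x \in X$ a pointer to a nearby net point $p_\ell(x) \in N_\ell$ with $\dist(x,p_\ell(x)) \le 2^{\ell-2}$, as in the Har-Peled--Mendel / Krauthgamer--Lee style. Going down, $N_{\ell-1}$ must contain $N_\ell$ and be a $2^{\ell-4}$-net. We greedily add points from $X$ that are at distance more than $2^{\ell-4}$ from the current $N_{\ell-1}$ (using $2^{\ell-3}$-covering slack as needed). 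Any candidate new point $x$ lies within $2^{\ell-2}$ of $p_\ell(x)$. Hence all net points of $N_{\ell-1}$ that are near $x$ are children of members of $R_\ell(p_\ell(x))$. Checking distance to the current $N_{\ell-1}$ therefore only requires scanning the $2^{O(\ddim)}$ relatives of $p_\ell(x)$ and their (already inserted) level-$(\ell-1)$ points nearby. Each point $x$ is examined once per level, at cost $2^{O(\ddim)}$, and we then update $p_{\ell-1}(x)$ to the nearest of these candidates. This gives $2^{O(\ddim)} n$ per level and $2^{O(\ddim)} n \log\Delta$ over the $L+1 = O(\log\Delta)$ levels.

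Relative lists and child pointers at level $\ell-1$ are computed locally. For $v \in N_{\ell-1}$ with parent candidate $p_\ell(v)$, any $w \in N_{\ell-1}$ with $\dist(v,w) \le c 2^{\ell-1}$ has $p_\ell(w)$ within $c2^{\ell-1} + 2^{\ell-1} \le c 2^\ell$ of $p_\ell(v)$. So $w$ is found by scanning the points assigned to the relatives of $p_\ell(v)$. Symmetrically, the children of $u \in N_\ell$, which lie within $2^{\ell+1}$ of $u$, are found among the points assigned to $R_\ell(u)$. Both scans are $2^{O(\ddim)}$ per point by \Cref{lemma:packing} and \Cref{lemma:num_children_parent}. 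Nestedness holds by construction, and the packing/covering bounds \eqref{eqn:N_l_properties} follow from the greedy rule at each level.

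The main obstacle is making the greedy insertion sound. A point inserted earlier at level $\ell-1$ must be visible when testing a later candidate, so the per-level "bucket" of level-$(\ell-1)$ points attached to each $u\in N_\ell$ has to be updated online. The constant $c$ must also be large enough that every relevant net point is reachable through one relative hop. I will verify this with triangle-inequality bookkeeping: candidate-to-parent distance is at most $2^{\ell-2}$, and the search radius is $2^{\ell-3}$ or $2^{\ell+1}$. I will also handle the fact that $p_\ell(x)$ is only approximately nearest.
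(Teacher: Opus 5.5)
Your proposal is correct and shares the paper's skeleton: a top-down pass, a pointer from each $x\in X$ to a nearby net point of the current level, neighbor lists of radius $\Theta(2^\ell)$, and the packing lemma to keep every local search at $2^{O(\ddim)}$ candidates. It differs in how each new level is built.

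\textbf{The paper's route.} Going from $N_{\ell+1}$ to $N_\ell$, the paper first takes a greedy $2^{\ell-3}$-net inside each exact Voronoi group $G(u)$. It then \emph{repairs} cross-group packing violations by an elimination pass along an ordering that lists $N_{\ell+1}$ first. A deleted point's survivor may be another $2^{\ell-3}$ away, so covering degrades to $2^{\ell-2}$, matching the $2^{\ell-2}$-covering in the statement. The paper then recomputes exact nearest-net-point groups.

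\textbf{Your route.} You instead \emph{prevent} violations with a single online greedy pass seeded by the old net. This buys three things: a genuine $2^{\ell-3}$-net at every level, nestedness for free, and no elimination step. It also needs only approximate pointers $p_\ell(x)$. That is all the paper uses downstream; for example, \Cref{lemma:bounded_spread} only needs $\dist(x,u)\le 2^{\ell_0-2}$.

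\textbf{The visibility argument.} The cost of your route is the visibility argument you flagged, and it does close:
\begin{enumerate}[(1)]
\item Append each inserted $w$ to the bucket of $p_\ell(w)$, with $p_\ell(w)=w$ for $w\in N_\ell$.
\item If $\dist(x,w)\le 2^{\ell-4}$, then $\dist(p_\ell(x),p_\ell(w))\le 2^{\ell-2}+2^{\ell-4}+2^{\ell-2}<2^\ell$. Hence $w$ is found when scanning the buckets of $R_\ell(p_\ell(x))$.
\item Each bucket lies in $B(u,2^{\ell-2})$ and is $2^{\ell-4}$-packing, so it has size $2^{O(\ddim)}$.
\item The relative lists at level $\ell-1$ need only $c\ge 1$, since $(c+1)2^{\ell-1}\le c\,2^\ell$. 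The child lists need $c\ge 9/4$, since $2^{\ell+1}+2^{\ell-2}=\tfrac94 2^\ell$. So $c=16$ is ample.
\end{enumerate}

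\textbf{A presentational fix.} Delete the opening bottom-up paragraph, since as written you describe that approach and abandon it mid-sentence. Then restate the invariant for the top-down pass: after processing level $\ell$ you know $N_\ell$, $R_\ell(\cdot)$, $p_\ell(x)$ for all $x\in X$, and $\child(u)$ for all $u\in N_{\ell+1}$.
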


\subsection{Well-Separated Pair Cover}

Similar to~\cite{DBLP:conf/soda/AryaM16}, we use a well-separated pair decomposition (WSPD)-based algorithm for MST.
Unlike its simplicity in Euclidean spaces~\cite{CallahanKosaraju95}, 
WSPD in doubling spaces (e.g., \cite{Talwar04,DBLP:journals/siamcomp/Har-PeledM06}) has complicated geometric structures.
In this section, we construct a slightly relaxed version of WSPD, referred to as the \emph{well-separated pair cover (WSPC)}, which turns out to be sufficient for our MST application.
Our construction of WSPC is simple, and it also has nice geometric structures that integrates well with the net hierarchy in \Cref{sec:net_hierarchy}.

Let $t > 0$.
For two subsets $A, B \subseteq X$, we say a pair $(A, B)$ is $t$-separated if $\dist(A, B) > t \cdot \max\{\diam(A), \diam(B)\}$.
The well-separated pair cover (WSPC) is defined as follows.

\begin{definition}[Well-separated pair cover (WSPC)]
    \label{def:WSPC}
    Consider a finite metric space $(X, \dist)$.
    For $t \geq 2$, a \emph{$t$-well-separated pair cover} of $X$ is a collection of pairs $\Psi = \{(A_1, B_1), \dots, (A_k, B_k)\}$ of nonempty subsets of $X$ satisfying the following:
    \begin{enumerate}[(1)]
    \item \label{it:WSPC_separation}
    Separation: Each pair $(A, B) \in \Psi$ is $t$-separated, i.e, $\dist(A, B) > t \cdot \max\{\diam(A), \diam(B)\}$.
    \item \label{it:WSPC_covering}
    Covering: For any two distinct points $p,q \in X$, there exists \emph{at least} one pair $(A,B) \in \Psi$ such that $p$ lies in either $A$ or $B$ and $q$ in the other set.
\end{enumerate} 
\end{definition}

\begin{remark*}
    Compared to the standard definition of well-separated pair decomposition~\cite{CallahanKosaraju95,Talwar04}, the only difference of \Cref{def:WSPC} is in \eqref{it:WSPC_covering}, where we do not require $A_i \times B_i$ and $A_j \times B_j$ to be disjoint for $i \neq j$.\footnote{When referring to WSPC pairs, we slightly abuse the notation and treat $A \times B$ as a collection of unordered pairs, i.e., $A \times B := \{\{a, b\} \colon a \in A, b \in B, a \neq b\}$.}
\end{remark*}

The following lemma shows the relationship between well-separated pair cover and MST.
Similar results also appear in~\cite[Lemma 2.2]{DBLP:conf/soda/AryaM16} and~\cite[Lemmas 3.1 and 3.2]{CallahanK93} but are stated for WSPD and Euclidean spaces.
We extend them to also work for WSPC and general metrics.
We provide a proof in \Cref{sec:WSPC} for completeness.

\begin{restatable}{lemma}{lemmaWSPCMST}\label{lemma:WSPC_MST}
    Given a metric space $(X, \dist)$ with $|X| = n$ and a $t$-WSPC $\Psi$ for $t \geq 2$:
    \begin{enumerate}[(1)]
        \item For every pair $(U, V) \in \Psi$, there is at most one edge of $U \times V$ in the MST of $X$.
        \item \label{it:WSPC_spanning}
        For each MST edge $e_i^*$, let $(U_i, V_i) \in \Psi$ be an arbitrary pair that contains $e_i^*$, and let $(p_i, q_i)$ be an arbitrary pair of points from $(U_i, V_i)$.
        Then $\{(p_i, q_i)\}_{i = 1}^{n - 1}$ forms a spanning tree of $X$.
    \end{enumerate}
\end{restatable}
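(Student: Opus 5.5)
The plan is to prove both parts from the cycle property of minimum spanning trees, using only the $t$-separation with $t \geq 2$. The key geometric fact comes first. Let $(U,V) \in \Psi$ be $t$-separated, write $D := \max\{\diam(U),\diam(V)\}$, and take any $u,u' \in U$, $v,v' \in V$. Then
\[
\dist(u,v) \geq \dist(U,V) > tD \geq 2D,
\]
while $\dist(u,u') \leq D$ and $\dist(v,v') \leq D$. From this I would show that any edge inside $U$ or inside $V$ is strictly shorter than any edge between $U$ and $V$. I would also show, by the triangle inequality, that $\dist(u',v') \leq \dist(u,v) + 2D$; this is the comparison between two cross edges that part (1) needs.

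For part (1), suppose two distinct MST edges $\{u,v\}$ and $\{u',v'\}$ lie in $U \times V$, with $\dist(u,v) \leq \dist(u',v')$. Deleting $\{u',v'\}$ from the MST splits it into two components, $S \ni u'$ and $S' \ni v'$. Each of the points $u, v$ lies in $S$ or $S'$, so I split into cases.
\begin{enumerate}[(a)]
\item If $u \in S$ and $v \in S'$ (or the symmetric case), then $\{u,v\}$ reconnects the tree, which is consistent with the cycle property. The better reconnection is the edge $\{u',u\}$ or $\{v,v'\}$, whichever joins $S$ and $S'$. For instance, in the configuration $u, u' \in S$ and $v, v' \in S'$, the cycle formed by the MST path from $u$ to $u'$, then the edge $u'v'$, then the path from $v'$ to $v$, then the edge $vu$ contains both cross edges.
\item Rather than chase the remaining cases, I would argue directly on a cycle. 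In the MST, the unique path from $u$ to $u'$ together with the path from $v$ to $v'$ and the two edges forms a closed walk. Look at the tree path $P$ from $u'$ to $v'$ that avoids the edge $\{u',v'\}$, whose existence is assumed for contradiction. Replace the edge $\{u',v'\}$ with whichever of $\{u',u\}$, $\{v,v'\}$ crosses the cut $(S,S')$. Such an edge exists: $u' \in S$ and $v' \in S'$, and $u$ and $v$ are joined by the MST edge $\{u,v\} \neq \{u',v'\}$, so $u$ and $v$ lie on the same side. Hence one of $\{u',u\}$ or $\{v,v'\}$ crosses the cut.
\end{enumerate}
That crossing edge has length at most $D < \dist(u',v')$, so the swap strictly decreases the weight, contradicting minimality. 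This cut argument is the cleanest route, and I would write it that way, assuming distinct distances or breaking ties consistently.

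For part (2), I would show that the chosen edges $\{p_i,q_i\}$ connect $X$. Since there are $n-1$ of them, connectivity makes them a spanning tree. The approach is to prove by induction over the MST edges in increasing order of length that $e_i^* = \{a_i,b_i\}$ has its endpoints connected by the chosen edges of strictly smaller index. Let $a_i \in U_i$, $p_i \in U_i$, $b_i \in V_i$, and $q_i \in V_i$. Then $\dist(a_i,p_i) \leq D_i < \dist(a_i,b_i)/2$, and the MST path between $a_i$ and $p_i$ uses only edges of length at most $\dist(a_i,p_i)$, by the cycle/minimax property of the MST. All of those edges are shorter than $e_i^*$, so by induction each of their endpoints is connected by chosen edges. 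The same holds for $b_i$ and $q_i$. Adding the edge $\{p_i,q_i\}$ then connects $a_i$ to $b_i$.

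The main subtlety is ties in edge weights and the well-foundedness of the induction. I would handle both by fixing a consistent total order, breaking ties by index, and inducting on strictly shorter edges; the strict inequality from $t \geq 2$ makes this work. Finally, $n-1$ edges connecting $n$ vertices form a spanning tree. If some chosen pair repeats or is a self-loop, connectivity with exactly $n-1$ edge slots would fail, so none can repeat.
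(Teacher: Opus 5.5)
Your argument is correct and is essentially the paper's. Part (2) is the same induction over the MST edges in order of weight, with your minimax-path observation playing the role of the paper's claim that $U_i$ and $V_i$ are already connected in $G^*_{i-1}$. In part (1) the paper swaps $\{u',v'\}$ for both $\{u,u'\}$ and $\{v,v'\}$, using $t \geq 2$ to make their total length smaller than $\dist(u',v')$. Your cut argument swaps in only the single crossing edge, so it needs just $\max\{\diam(U),\diam(V)\} < \dist(U,V)$.

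When writing it up, drop case (a) and the remark about a tree path from $u'$ to $v'$ avoiding $\{u',v'\}$; you already observe that $u$ and $v$ lie on the same side of the cut, so neither is needed. Also state the inductive claim in part (2) as ``the endpoints of $e_i^*$ are connected by chosen edges of index at most $i$'' rather than ``strictly smaller index'', since your inductive step itself uses $\{p_i,q_i\}$.
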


\begin{restatable}{lemma}{lemmaWSPCdoubling}\label{lemma:WSPC_doubling}
    There is an algorithm that, given as input 
    a metric space $(X, \dist)$ with $\ddim(X) \leq \ddim$ and aspect ratio $\Delta$, 
    a net hierarchy $\nets$ over $X$,
    and a parameter $t \geq 2$,
    computes in $t^{O(\ddim)} n \log \Delta$ time a collection $\Psi$ of pairs of net points,
    such that each pair $(u, v) \in \Psi$ satisfies the following:
    \begin{enumerate}[(1)]
        \item \label{it:bounded_lev_diff}
        Bounded level difference: $|\lev(u) - \lev(v)| \leq 1$.
        \item \label{it:separation}
        Separation: Either $\dist(u, v) > (2 t + 2) \cdot \max\{2^{\lev(u)}, 2^{\lev(v)}\}$ or 
        $\lev(u) = \lev(v) = 0$.
        \item \label{it:Psi_covering}
        Covering: For any two distinct point $x, y \in X$, there exists at least one pair $(u, v) \in \Psi$ such that $x$ is in either $\wspcball{u}$ or $\wspcball{v}$ and $y$ in the other set.
        \item \label{it:bounded_deg}
        Bounded degree: Each net point $u$ appears in at most $t^{O(\ddim)}$ pairs in $\Psi$.
        Consequently, $|\Psi| \leq t^{O(\ddim)} n \log \Delta$.
    \end{enumerate}
    In particular, 
    $\Phi = \{(\wspcball{u}, \wspcball{v}) \colon (u, v) \in \Psi\}$ is a $t$-WSPC of $X$, and it can be constructed 
    in time $t^{O(\ddim)} n \log \Delta$.
\end{restatable}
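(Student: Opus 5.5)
We construct $\Psi$ by the standard top-down recursive "refine the larger ball" procedure on the net hierarchy. Let $c = 2t+2$. We start with the pair $(r, r)$, where $r$ is the single point of $N_L$. We process a pending pair $(u,v)$ with $|\lev(u)-\lev(v)|\le 1$ as follows. If $\dist(u,v) > c\cdot\max\{2^{\lev(u)},2^{\lev(v)}\}$, or if $\lev(u)=\lev(v)=0$ and $u\neq v$, we output $(u,v)$. Otherwise we split the endpoint of higher level; if the levels are equal we split both, or for simplicity the one with the lexicographically larger identity, so that the level difference stays at most one. Splitting means replacing $u$ by each child $u'\in\child(u)$ and recursing on $(u',v)$. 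When $u=v$ at level $0$, the pair covers nothing and is discarded. We deduplicate pairs with a hash set or by marking per net point. Property \eqref{it:bounded_lev_diff} holds by construction, and \eqref{it:separation} holds by the output rule.

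\textbf{Covering.} We maintain the invariant that for every pair $\{x,y\}$ of distinct points, some pending or output pair $(u,v)$ has $x\in B(u,2^{\lev(u)})$ and $y\in B(v,2^{\lev(v)})$. The initial pair works because $2^{L}\ge \Delta$. When $u$ is split, Lemma~\ref{lemma:descendants_packing_covering}(2) with $k=\lev(u)-1$ gives, for $x\in B(u,2^{\lev(u)})$, a child $u'$ with $\dist(x,u')\le 2^{\lev(u)-3}\le 2^{\lev(u')}$. So $x\in B(u',2^{\lev(u')})$ and the invariant survives. At level $0$ the balls $B(u,1)$ may contain other points, but the children covering radius $2^{-2}<1$, the minimum distance, pins $x$ exactly. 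A pair with $u=v$ at level $0$ therefore covers only $x=y$, and discarding it loses nothing. The recursion terminates since levels decrease. Finally, $t$-separation of $\Phi$ follows from $\diam(B(u,2^{\lev u}))\le 2^{\lev(u)+1}$ and $\dist(B_u,B_v)\ge \dist(u,v)-2^{\lev u}-2^{\lev v} > (2t+2-2)\max 2^{\lev} \ge t\cdot\max\diam$. In the level-$0$ case both balls are singletons, so the pair is trivially separated.

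\textbf{Bounded degree (main obstacle).} Any pair $(u,v)$ that is ever generated, output or not, has a parent pair $(\hat u,v)$ with $\dist(\hat u,v)\le c\,2^{\lev(\hat u)+1}$. Since $\dist(u,\hat u)\le 2^{\lev(u)+2}$, we get $\dist(u,v)\le O(t)\,2^{\lev(u)}$. Given $u$, its partner $v$ lies at level $\lev(u)\pm1$ within distance $O(t)2^{\lev u}$. Since $N_{\lev v}$ is $2^{\lev(v)-3}$-packing, Lemma~\ref{lemma:packing} bounds the number of such $v$ by $t^{O(\ddim)}$. The subtle point is that the split endpoint may be the lower-level one after a previous split; the parent-pair argument must be applied to whichever endpoint was split, and a symmetric distance bound then applies to both. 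Summing over the at most $n(L+1)$ net points gives $|\Psi|\le t^{O(\ddim)}n\log\Delta$.

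\textbf{Running time.} Each generated pair does $O(1)$ distance work plus $|\child|=2^{O(\ddim)}$ (Lemma~\ref{lemma:num_children_parent}) branching. With deduplication, each distinct pair is expanded once, so the total time is $t^{O(\ddim)}n\log\Delta$.
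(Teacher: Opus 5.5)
Your proposal follows the paper's proof step for step. It uses the same recursion from $(r,r)$, which outputs separated (or distinct level-$0$) pairs and otherwise refines the higher-level endpoint. Covering comes from \Cref{lemma:descendants_packing_covering}(2), and the degree bound comes from a case split on which endpoint's parent was refined, followed by packing. In two places you are more careful than the paper:
\begin{enumerate}[(a)]
\item Your pinning remark is what actually justifies discarding a call with $u=v$ at level $0$. The paper's invariant $x\in B(u,2^{\lev(u)})$ alone does not exclude $x\neq y$ there.
\item Your explicit deduplication is needed for the running time. The hierarchy is not a tree, so without memoization a pair is re-expanded once per ancestor chain. Bounding the number of \emph{distinct} partners of each $u$, as the paper does, does not bound the number of calls.
\end{enumerate}

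Two things still need repair.

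First, the sentence ``in the level-$0$ case both balls are singletons'' is false with closed balls and minimum distance $1$, as you yourself note a few lines earlier. For the closest pair $x,y$ of $X$, at distance exactly $1$, no pair with a level-$\geq 1$ endpoint can be separated enough to cover it, so the algorithm outputs the level-$0$ pair $(x,y)$. Then $y\in B(x,1)$ and $x\in B(y,1)$, so this member of $\Phi$ is not $t$-separated. The paper makes the same slip. The fix is to turn your pinning argument into an explicit invariant: $\dist(x,u)\le 2^{\lev(u)-2}$ and $\dist(y,v)\le 2^{\lev(v)-2}$. This holds at the root because $\Delta\le 2^{L-4}$, and it is preserved by always refining to the child nearest the tracked point. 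Under this invariant, the chain tracked for $(x,y)$ can only end at level $0$ in the pair $(x,y)$ itself. The level-$0$ members of $\Phi$ can therefore be taken to be $(\{u\},\{v\})$, which is what the paper effectively does when it later builds $\Phi'$.

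Second, you do not address the last clause, that $\Phi$ itself can be constructed in $t^{O(\ddim)}n\log\Delta$ time. The paper does this by computing $P_\ell(x)=\{u\in N_\ell:\dist(u,x)\le 2^\ell\}$ for all $x$ and $\ell$, inductively from $P_{\ell-1}(x)$. It uses two facts: any $v\in P_{\ell-1}(x)$ is a child of every $u\in P_\ell(x)$, and $v$ has only $2^{O(\ddim)}$ parents. This costs $2^{O(\ddim)}$ per pair $(x,\ell)$.
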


We note that although both $\Psi$ and $\Phi$ can be constructed efficiently, our MST algorithm does not need to construct the WSPC $\Phi$ explicitly.
Instead, we merely construct the collection $\Psi$ of centers and, starting from these centers, apply a net-based search on the net hierarchy.

\subsection{Approximating the Doubling Dimension and Bounding the Aspect Ratio}

Using the net hierarchy, we can approximate $\ddim(X)$ within a constant factor in time $2^{O(\ddim(X))} n \log \Delta$.
\sloppy
See \Cref{lemma:ddim_apx} for a formal statement.

By merging nearby points, we can reduce the aspect ratio $\Delta$ of $X$ to $n^{10} / \epsilon$.
The formal statement together with its proof is given in \Cref{lemma:bounded_spread}.
After rescaling, we can assume henceforth that the minimum interpoint distance of $X$ is $1$ and the diameter of $X$ is $\Delta \leq n^{10} / \epsilon$.

\section{A Fast Algorithm for MST in Doubling Metrics}
\label{sec:main}
In this section, we give a $(1 + \epsilon)$-approximation algorithm for MST with improved running time.

\thmmain*

At a high level, our algorithm follows the classical well-separated-pair approach to MST, going back to Callahan and Kosaraju and used in particular by Arya and Mount in the Euclidean setting~\cite{CallahanK93,DBLP:conf/soda/AryaM16}.
We first construct a net hierarchy together with a well-separated pair cover (WSPC).
For each pair $(U,V)$ in the cover, we add a representative edge intended to approximate the closest pair between $U$ and $V$.
Thus, as in previous work, the main task is to determine how accurately these representative edges need to be computed at the different scales. 

Arya and Mount search for these representative edges with an accuracy that is identical for all scales. By a basic analysis of this approach, the precision parameter has to compensate  $O(\log n)$ scales, which invokes an additional $\log n$ factor in the running time. They work around this issue with an improved analysis based on decomposition of edge segments, which is tailored to the Euclidean space and does not apply to doubling spaces. 

Our starting point is the nested net hierarchy
$
X = N_0 \supseteq N_1 \supseteq \dots \supseteq N_L .$
As shown by \Cref{lemma:net_size_MST}, each level $\ell$ yields a lower bound on $\MST(X)$ through the quantity $2^{\ell/2}|N_\ell|$.
We therefore single out the level $\ell^*$ that gives the strongest such bound, namely
\begin{equation}\label{eqn:def_l*}
    \ell^* := \argmax_{0 \leq \ell \leq L} 2^{\ell/2} \cdot |N_\ell|.
\end{equation}
This level directly determines the algorithmic treatment of the different scales.

Recall that each WSPC pair $(U,V)$ has the form $U = B(u,2^{\lev(u)})$ and $V = B(v,2^{\lev(v)})$.
We associate with $(U,V)$ the level $\ell := \max\{\lev(u),\lev(v)\}$.
Depending on this level, our algorithm handles WSPC pairs differently.
More precisely, we call a level $\ell$ \emph{low} if $\ell < \ell^* - 10\log(1/\epsilon)$, 
\emph{high} if $\ell > \ell^* + 10\log(1/\epsilon)$,
and \emph{middle} otherwise.

If $\ell$ is a low level, then the diameter of $U$ and $V$ is already exponentially smaller than the lower bound certified at level $\ell^*$.
We therefore do not need an accurate closest-pair computation: it suffices to connect their centers $(u,v)$.
Although this is only a crude approximation for an individual pair, the corresponding distances are so small that the total error contributed by all low-level pairs is negligible.

If $\ell$ is a middle level, then we need a more accurate representative edge.
Here we follow the net-based closest-pair approach of Arya and Mount~\cite{DBLP:conf/soda/AryaM16}: we refine $U$ and $V$ by descending in the hierarchy until either we reach scale $\epsilon 2^{\lev(u)}$ or the corresponding descendant sets become too large, and then connect the closest pair among the resulting net points.
The key point is that there are only $O(\log(1/\epsilon))$ middle levels in total.
Thus, even though this step is more expensive than the crude treatment of low levels, it is applied only on a short interval of scales, which avoids the additional $\log n$ factor that would arise from carrying out the same refinement over all levels.

Finally, if $\ell$ is a high level, we again use a net-based search, but now with a threshold that increases with $\ell$.
In other words, for larger scales we allow larger nets and continue refining further before stopping.
This yields a more accurate representative edge for each high-level pair.
While this increases the work spent on an individual pair, the definition of $\ell^*$ implies that the number of net points---and hence also the number of relevant pairs---decreases geometrically with the level.
Therefore, the total running time over all high levels remains near-linear.
At the same time, the resulting approximation error can still be summed over all high-level pairs and charged to the optimum MST.

\subsection{The algorithm}
\label{sec:main:algo}

We now describe the algorithm formally.
Let
$
\gamma := \epsilon^{-1} \log(1/\epsilon)
$.
For each high level $\ell > \ell^* + 10 \log(1/\epsilon)$, define
$
\tau_\ell := 1.1^{\ell - \ell^*}
$.
Thus, $\gamma$ is the threshold used at middle levels, while $\tau_\ell$ is a level-dependent threshold used at high levels that grows geometrically with $\ell$.
Our algorithm is described in \Cref{alg:MST_improved}.

\begin{algorithm}[!ht]
\caption{Improved $(1 + \epsilon)$-Approximation for MST}
\DontPrintSemicolon
\label{alg:MST_improved}
    \KwIn{finite metric space $(X, \dist)$, 
    precision parameter $\epsilon \in (0, \frac{1}{2})$}
    compute the net hierarchy $\nets = \{N_0, N_1, \dots, N_L\}$ \label{line:net_hierarchy}\;
    let $\ell^* \gets \argmax_{0 \leq \ell \leq L} 2^{\ell/2} \cdot |N_\ell|$ \label{line:critical_level}\;
    apply \Cref{lemma:WSPC_doubling} with $t = 32$ to compute the collection $\Psi$ of pairs of net points \label{line:WSPC}\;
    let $G \gets (X, \emptyset)$ \label{line:G_initial}\;
    \For{$(u, v) \in \Psi$}{\label{line:U_V_starts}
        \If{$\max\{\lev(u), \lev(v)\} < \ell^* - 10 \log(1/\epsilon)$}{
            let $(p, q) = (u, v)$ \label{line:low_level}\;
        }
        \ElseIf{$\ell^* - 10 \log(1/\epsilon) \leq \max\{\lev(u), \lev(v)\} \leq \ell^* + 10 \log(1/\epsilon)$}{
            let $0 \leq k \leq \log(1/\epsilon) +1$ be the largest integer, such that $|\desc_{\lev(u) - k}(u)|, |\desc_{\lev(v) - k}(v)| \leq \gamma$ \label{line:middle_level}\;
            compute $(p, q) = \argmin_{p \in \desc_{\lev(u) - k}(u), q \in \desc_{\lev(v) - k}(v)} \dist(p, q)$ \label{line:edge_middle_lev}\;
        }
        \Else{
            let $\ell \gets \max\{\lev(u), \lev(v)\}$ \;
            let $0 \leq k \leq  \log(1/\epsilon) +1$ be the largest integer, such that $|\desc_{\lev(u) - k}(u)|, |\desc_{\lev(v) - k}(v)| \leq \tau_\ell$  \label{line:high_level}\;
            compute $(p, q) = \argmin_{p \in \desc_{\lev(u) - k}(u), q \in \desc_{\lev(v) - k}(v)} \dist(p, q)$ \label{line:edge_high_lev}\;
        }
        add $(p, q)$ to $G$\; \label{line:U_V_ends}
    }
    \Return $\MST(G)$ \label{line:return}
    
\end{algorithm}

Line \ref{line:WSPC} can be viewed as computing a WSPC for $X$.
Technically, we do not need to compute the WSPC explicitly, but only construct a collection $\Psi$ of their ``centers''.
From Line \ref{line:U_V_starts} to Line \ref{line:U_V_ends}, we handle each pair $(u, v) \in \Psi$ differently depending on which level it lies in.
Call $(u, v) \in \Psi$ a \emph{low-level pair} if their levels satisfy
$\max\{\lev(u), \lev(v)\} < \ell^* - 10 \log(1/\epsilon)$,
a \emph{middle-level pair} if $\max\{\lev(u), \lev(v)\} \in [\ell^* - 10 \log(1/\epsilon), \ell^* + 10 \log(1/\epsilon)]$,
and a \emph{high-level pair} otherwise.
If $(u, v)$ is a low-level pair, we add to $G$ the edge $(u, v)$, which connects the ``centers'' of the corresponding WSPC pair. (Line \ref{line:low_level}).
If $(u, v)$ is a middle-level pair, we choose the largest integer $0 \le k \le \lceil \log(1/\epsilon) \rceil$ such that
both $u$ and $v$ have at most $\gamma$ descendants at $k$ levels below, i.e., 
$|D_{\lev(u) - k}(u)|, |D_{\lev(v) - k}(v)| \le \gamma$ (Line \ref{line:middle_level}), and add the closest pair in $D_{\lev(u) - k}(u) \times D_{\lev(v) - k}(v)$ to $G$ (Line \ref{line:edge_middle_lev}).
If $(u, v)$ is a high-level pair, we repeat the same procedure, except with a carefully chosen threshold $\tau_\ell = 1.1^{\ell - \ell^*}$ that increases with the level (Line \ref{line:high_level}).
Hence, at higher levels we may refine further before stopping, which increases the work per pair but also yields a more accurate representative edge.
The analysis will show that this is affordable because the number of high-level pairs decreases rapidly with $\ell$.

\subsection{Analysis}
\label{sec:main:analysis}
In this section, we show that \Cref{alg:MST_improved} returns a $(1 + O_{\ddim}(\epsilon))$-approximate MST of $X$.
Our main result is the following:

\begin{lemma}\label{lemma:improved_alg_correctness}
    \Cref{alg:MST_improved} returns a $(1 + 2^{O(\ddim)} \epsilon)$-approximate MST.
\end{lemma}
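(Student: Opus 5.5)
We follow the standard WSPD-to-MST charging argument. Let $e_1^*,\dots,e_{n-1}^*$ be the edges of an exact MST $T^*$ of $X$. For each $e_i^*$, fix a pair $(u_i,v_i)\in\Psi$ whose balls $U_i=\wspcball{u_i}$, $V_i=\wspcball{v_i}$ cover its endpoints; this is possible by \Cref{lemma:WSPC_doubling}\eqref{it:Psi_covering}. Let $(p_i,q_i)$ be the edge the algorithm added for $(u_i,v_i)$.

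\textbf{Reduction to a per-pair error bound.} We first check that $p_i\in U_i$ and $q_i\in V_i$, so that \Cref{lemma:WSPC_MST}\eqref{it:WSPC_spanning} applies. For low pairs this is immediate. For middle and high pairs it requires that descendants lie in the ball; \Cref{lemma:descendants_packing_covering} only places them in $B(u,2^{\lev(u)+2})$. We would handle this by redoing \Cref{lemma:WSPC_MST} with the enlarged balls: the separation factor $2t+2$ with $t=32$ leaves ample slack for the enlarged pair to remain $2$-separated, so the conclusion still holds. It then follows that the edges $\{(p_i,q_i)\}$ span $X$. Hence
\[
w(\MST(G))\le \sum_i \dist(p_i,q_i)\le \MST(X)+\sum_i \err_i, \qquad \err_i:=\dist(p_i,q_i)-\dist(e_i^*).
\]
By \Cref{lemma:WSPC_MST}(1), each pair is charged by at most one MST edge. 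The goal is therefore $\sum_i \err_i\le 2^{O(\ddim)}\epsilon\,\MST(X)$.

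\textbf{Per-level bounds.} Write $\ell_i=\max\{\lev(u_i),\lev(v_i)\}$. We bound $\err_i$ in three cases.
\begin{enumerate}[(a)]
\item \emph{Low pairs.} Here $\err_i\le O(2^{\ell_i})$. By bounded degree there are at most $2^{O(\ddim)}|N_\ell|$ pairs at level $\ell$, and by the choice of $\ell^*$ we have $|N_\ell|\le 2^{(\ell^*-\ell)/2}|N_{\ell^*}|$. So the total low-level error is at most
\[
\sum_{\ell<\ell^*-10\log(1/\epsilon)} 2^{O(\ddim)}\,2^{\ell}\,2^{(\ell^*-\ell)/2}|N_{\ell^*}|
= 2^{O(\ddim)}\,\epsilon^{5}\,2^{\ell^*}|N_{\ell^*}|,
\]
since the sum is geometric in $2^{\ell/2}$ and dominated by its top term.
\item \emph{Middle and high pairs.} If the descent reached $k=\log(1/\epsilon)+1$, then by covering (\Cref{lemma:descendants_packing_covering}) we get $\err_i\le 2^{\ell_i-k}=O(\epsilon\,\dist(e_i^*))$, using $\dist(e_i^*)\gtrsim 2^{\ell_i}$ from separation. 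Summed over these pairs this is $\epsilon\,\MST(X)$. Otherwise the descent stopped early at some $k$, with $\err_i\le 2^{\ell_i-k}$ and some descendant set at level $\ell_i-k-1$ of size larger than $\gamma$ (middle) or $\tau_\ell$ (high).
\end{enumerate}

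\textbf{Main obstacle: charging early-stopped pairs.} The key step is to show that such a large descendant set certifies a large local MST weight. A set of more than $\gamma$ points that is $2^{\ell-k-4}$-packing inside a ball of radius $2^{\ell+2}$ forces the MST restricted near $u$ to have weight $\Omega(\gamma\, 2^{\ell-k})$; this is the Steiner-tree argument behind \Cref{lemma:net_size_MST}, applied locally. Thus $\err_i\le 2^{\ell-k}\lesssim \frac{1}{\gamma}\cdot(\text{MST weight inside } B(u,2^{\ell+3}))$. Each such ball meets $2^{O(\ddim)}$ pairs per level, and there are $O(\log(1/\epsilon))$ middle levels. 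Summing, the MST is overcounted at most $2^{O(\ddim)}\log(1/\epsilon)$ times, so the middle contribution is $2^{O(\ddim)}\log(1/\epsilon)\gamma^{-1}\MST(X)=2^{O(\ddim)}\epsilon\,\MST(X)$.

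The delicate point is making this local lower bound rigorous, since MST edges may leave the ball. We would handle it by bounding the sum of MST edge lengths within distance $2^{\ell-k}$ of points of the ball. Pieces of long edges are truncated, and their truncated part is charged to the packing count.

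For high pairs we cannot afford $\log$ overcounting, but there the threshold $\tau_\ell$ grows. We would instead use a global bound: the early-stopped high pairs at level $\ell$ number at most $2^{O(\ddim)}|N_\ell|\le 2^{O(\ddim)}2^{-(\ell-\ell^*)/2}|N_{\ell^*}|$, each with error at most $2^{\ell}/\tau_\ell^{1/O(\ddim)}$. The largest $k$ allowed satisfies $2^{k\,O(\ddim)}\ge\tau_\ell$. We would then combine the geometric decay with the local charging argument above, dividing by $\tau_\ell$, to get a convergent sum over levels. This part needs care and may need the exponent $1.1$ tuned.

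Finally, all residual terms are compared to $2^{\ell^*}|N_{\ell^*}|\le 2^{O(\ddim)}\MST(X)$ via \Cref{lemma:net_size_MST}, which closes the argument.
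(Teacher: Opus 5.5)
Your reduction (enlarging the balls so that \Cref{lemma:WSPC_MST} applies to the chosen edges) and your low- and middle-level arguments follow the paper's proof. Two small corrections. First, pairs with $\lev(u)=\lev(v)=0$ carry no separation guarantee, so in the enlarged cover they must be kept as singletons $(\{u\},\{v\})$. Second, no truncation of edges is needed, and truncation has no meaning in a general metric anyway. \Cref{lemma:net_size_MST} with $A=B(u,2^{\lev(u)+2})$ already gives $2^{\lev(u)-k}\le \frac{32}{\gamma}\,w(B(u,2^{\lev(u)+4})\cap T^*)$, counting whole edges with an endpoint in the ball. The multiplicity bound $|\{u\in N_\ell : e\in B(u,2^{\ell+4})\}|\le 2^{O(\ddim)}$ per level then handles long edges.

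The genuine gap is at the high levels. The global bound you spell out does not work. Multiplying $2^{O(\ddim)}|N_\ell|\le 2^{O(\ddim)}2^{-(\ell-\ell^*)/2}|N_{\ell^*}|$ pairs by a per-pair error of $2^{\ell}\tau_\ell^{-1/O(\ddim)}$ gives $2^{O(\ddim)}2^{\ell^*}|N_{\ell^*}|\cdot 2^{(\ell-\ell^*)/2}\,1.1^{-(\ell-\ell^*)/O(\ddim)}$, which increases with $\ell$. Even with the sharper bound $2^{\ell}|N_\ell|=O(\MST(X))$ from \Cref{lemma:net_size_MST}, you only get $\sum_{j>10\log(1/\epsilon)}1.1^{-j/O(\ddim)}$, roughly $O(\ddim)\,\epsilon^{\Theta(1/\ddim)}$, not $2^{O(\ddim)}\epsilon$. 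Retuning $1.1$ is not available either: the lemma is about the algorithm as stated, and the running-time bound needs $\tau_\ell^2\, 2^{-(\ell-\ell^*)/2}$ to be summable.

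Your passing mention of ``local charging, dividing by $\tau_\ell$'' is in fact the entire argument. Your stated reason for avoiding it, that high levels cannot afford the $\log$ overcount, is mistaken. For an early-stopped high pair, \Cref{lemma:net_size_MST} bounds the error by $\frac{32}{\tau_{\lev(u)}}w(B(u,2^{\lev(u)+4})\cap T^*)+\frac{32}{\tau_{\lev(v)}}w(B(v,2^{\lev(v)+4})\cap T^*)$, using $\tau_\ell\ge\tau_{\lev(v)}$. Each MST edge $e$ lies in $2^{O(\ddim)}$ such balls per level. So $e$ is charged in total at most $2^{O(\ddim)}w(e)\sum_{\ell>\ell^*+10\log(1/\epsilon)}\tau_\ell^{-1}\le 2^{O(\ddim)}w(e)\sum_{j\ge 10\log(1/\epsilon)}1.1^{-j}= 2^{O(\ddim)}\,O(1.1^{-10\log(1/\epsilon)})\,w(e)\le 2^{O(\ddim)}\epsilon\, w(e)$. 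The weights $1/\tau_\ell$ form a geometric series whose tail is already $O(\epsilon)$, and this replaces the $\log(1/\epsilon)/\gamma$ factor of the middle levels. The maximality of $\ell^*$ and the decay of $|N_\ell|$ play no role in the high-level error bound; they are needed only for the running time.
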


Let $T^*$ denote the MST of $X$.
For a subset $P \subseteq X$, we slightly abuse notation and write
\[
P \cap T^* := \{e \in T^* : e \text{ has at least one endpoint in } P\}.
\]
For every subset $E \subseteq T^*$ of edges, denote the edge weight of $E$ as $w(E) := \sum_{e \in E} w(e)$, where $w(e) = \dist(x, y)$ for $e = (x, y)$.
We will repeatedly use the following lower bound for the local MST weight.
The lemma was proven by~\cite{CzumajEtAl05} for Euclidean spaces, and we extend it to doubling metrics.
\begin{lemma}\label{lemma:net_size_MST}
    Let $T^*$ be the MST of $X$.
    Let $A \subseteq X$ be an arbitrary subset and $N \subseteq A$ be $\rho$-packing for $A$.
    If $|N| \geq 2$,
    then
    $
    w\Big(B(A, \diam(A)) \cap T^*\Big)
    \geq \frac{\rho |N|}{2}
    $.
\end{lemma}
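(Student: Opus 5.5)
Write $D := \diam(A)$ and $N = \{x_1,\dots,x_m\}$ with $m = |N| \ge 2$. The plan is to find many disjoint pieces of $T^*$ near the points of $N$, each of weight at least about $\rho/2$, all contained in $B(A, D)\cap T^*$. The natural tool is the unique tree paths in $T^*$ between the packing points. Since $N$ is $\rho$-packing, any two points of $N$ are at distance at least $\rho$. Since $m\ge 2$, we also have $D \ge \rho$.

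\textbf{Step 1 (local paths stay inside the region).} Fix $x_i \in N$ and choose some other $x_j \in N$. Consider the $T^*$-path from $x_i$ to $x_j$. I claim every edge of this path has weight at most $\dist(x_i,x_j) \le D$. This is the cycle/cut property of MSTs: if some path edge were heavier than $\dist(x_i,x_j)$, swapping it for the edge $(x_i,x_j)$ would give a lighter spanning tree. Walk along the path from $x_i$. The first edge leaving $x_i$ has its endpoint $x_i \in A$, so it lies in $A \cap T^* \subseteq B(A,D)\cap T^*$.

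To get a clean bound, I would not take whole paths. Instead, for each $x_i$ I take the initial segment of its path from $x_i$ until the walk first leaves the open ball $B(x_i,\rho/2)$. This happens before reaching $x_j$, because $\dist(x_i,x_j)\ge\rho$. Every vertex of the segment except the last lies within distance $\rho/2 \le D$ of $x_i \in A$. So every edge of the segment has an endpoint in $B(A,D)$. Here I must check that the edge convention $P\cap T^*$ (edges with at least one endpoint in $P$) is satisfied: indeed, the first endpoint of each edge lies within $\rho/2$ of $x_i$. By the triangle inequality, each segment has weight at least $\rho/2$.

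\textbf{Step 2 (disjointness, the main obstacle).} The segments for different $x_i$ might share edges. I would refine each segment to its portion inside $B(x_i,\rho/4)$, with the crossing edge counted fractionally. The balls $B(x_i,\rho/4)$ are disjoint, but a single long edge can cross several balls, so fractional counting is needed. The cleaner route is to charge length geometrically: measure the length of each tree edge $(a,b)$ along the segment $[a,b]$ in the metric sense. The part of the path inside $B(x_i,\rho/4)$ contributes at least $\rho/4$ of "radial progress" for each $i$. A single edge can contribute radial progress to ball $i$ and to ball $j$ only through disjoint portions of its length: by the triangle inequality, $\dist(a,b) \ge \sum_i(\text{progress inside ball } i)$ when the edge passes from ball $i$ through ball $j$.

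Making this rigorous in a general metric, where edges are not geometric segments, is the delicate part. My fallback is a cruder but robust counting argument. For each $i$, let $e_i$ be the edge of the segment that crosses the sphere of radius $\rho/4$ around $x_i$. Assign to $x_i$ the weight of $e_i$ if $e_i$ is long, or the sum of the short edges inside $B(x_i,\rho/4)$, which is at least the required radial progress and is disjoint across $i$. A long edge ($\ge \rho/2$) incident to a point within $\rho/4$ of $x_i$ can be assigned to at most two indices, namely those of its two endpoints' balls. This costs a factor $2$, giving total weight at least $m\cdot(\rho/4)/2$.

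\textbf{Step 3 (constants).} The constant above ($\rho|N|/8$) is weaker than the claimed $\rho|N|/2$. To reach $\rho|N|/2$, I would instead use the standard argument directly. Consider the MST of $N$ under $\dist$. It has $m-1 \ge m/2$ edges, each of length at least $\rho$, so its weight is at least $\rho m/2$. Then use the Steiner-tree comparison: the subgraph $F$ of $T^*$ consisting of the union of $T^*$-paths between points of $N$ connects $N$. Shortcutting an Euler tour of $F$ gives $\MST(N) \le 2w(F)$. Hence $w(F) \ge \rho m/4$, and more carefully, using the tree structure, $w(F)\ge \MST(N)/2\cdot$(tour factor). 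The remaining issue is showing $F\subseteq B(A,D)\cap T^*$. By Step 1, every edge of $F$ has weight at most $D$ and lies on a path between points of $A$. I would show that each vertex on such a path is within $D$ of $A$ by the bottleneck property, which I expect to be the genuinely hard point. If necessary, I would restrict to the first and last edges to match the paper's constant.
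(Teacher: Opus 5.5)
\textbf{There is a genuine gap: the proposal does not establish the stated constant $\rho|N|/2$.}

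Your Steps 1--2 give at best $w(B(A,D)\cap T^*)\ge \rho|N|/8$, and only after a repair. In the ``short'' case you must also charge the crossing edge $e_i$ to $x_i$, because the edges strictly inside $B(x_i,\rho/4)$ may have total weight $0$. This repair is harmless, since a short crossing edge has no endpoint in any other ball $B(x_j,\rho/4)$.

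Step 3, which is meant to reach the stated constant, rests on a false claim. The subtree $F$ spanned by $N$ need not have its edges in $B(A,D)\cap T^*$. The bottleneck property bounds the weight of each edge on the $T^*$-path from $x_i$ to $x_j$, not how far that path wanders. For example, let $x_1=(0,0)$, $x_2=(1,0)$ in the plane and $A=N=\{x_1,x_2\}$, so $D=\rho=1$. Add points at spacing $\delta\le 1/2$ along the other three sides of the rectangle $[0,1]\times[0,H]$ with $H\gg 1$. The MST is this chain, and the path from $x_1$ to $x_2$ climbs to height $H$, arbitrarily far from $A$. Its first and last edges have length $\delta$, so restricting to them does not help either.

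Moreover, even if containment held, bounding $2w(F)$ by $\MST(N)\ge(m-1)\rho$ loses a factor $2$. The shortcut tour is a Hamiltonian cycle on $N$, which directly gives $2w(F)\ge m\rho$.

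The missing idea is to truncate rather than prove containment, and to let an Euler tour handle the disjointness you struggled with in Step 2. Take an Euler tour of all of $T^*$, which traverses every edge exactly twice. List $N$ as $x_1,\dots,x_m$ in the order the tour first visits it, and cut the tour into the walks $P(x_i,x_{i+1})$, indices mod $m$. For each walk there are two cases:
\begin{itemize}
\item Every edge of $P(x_i,x_{i+1})$ has an endpoint in $B(A,D)$. Then the whole walk counts, and its length is at least $\dist(x_i,x_{i+1})$.
\item There is a first edge $(p,q)$ with neither endpoint in $B(A,D)$. The prefix before it lies in $B(A,D)\cap T^*$ and has length at least $\dist(x_i,p)>D\ge\dist(x_i,x_{i+1})$.
\end{itemize}
Either way each walk contributes at least $\dist(x_i,x_{i+1})\ge\rho$ of weight from $B(A,D)\cap T^*$. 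Since the walks together traverse each edge exactly twice, $2w(B(A,D)\cap T^*)\ge m\rho$.

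This argument uses no MST property, so your cycle-property argument in Step~1 is unnecessary. Your weaker constant $\rho|N|/8$ would in fact suffice for every application in the paper, where constants are absorbed into $2^{O(\ddim)}$. It does not, however, prove the lemma as stated.
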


\begin{proof}

    Denote $N = \{x_1, x_2, \dots, x_s\}$ for $s \geq 2$.
    Since $T^*$ is a tree, there is a traversal of $T^*$ that traverses every edge of $T^*$ twice and every point in $X$ at least once.
    Without loss of generality, assume this traversal goes through $N$ in the order $x_1, x_2, \dots, x_s$, and denote the path between $x_i$ and $x_{i + 1}$ by $P(x_i, x_{i + 1}) \subseteq T^*$.
    Then
    \[
    \sum_{i = 1}^s w(P(x_i, x_{i + 1})) = 2w(T^*) = 2\MST(X).
    \]
    Fix $i \in [s]$, and consider the following two cases for $P(x_i, x_{i + 1})$.

    If every edge in $P(x_i, x_{i + 1})$ has at least one endpoint in $B(A, \diam(A))$, then by definition we have 
    $
    P(x_i, x_{i + 1}) = P(x_i, x_{i + 1}) \cap B(A, \diam(A))$.
    By the triangle inequality, 
    \[
    w\Big(P(x_i, x_{i + 1}) \cap B(A, \diam(A))\Big)
    = w(P(x_i, x_{i + 1}))
    \geq \dist(x_i, x_{i + 1}).
    \]

    If there exists an edge in $P(x_i, x_{i + 1})$ for which neither endpoint lies in $B(A, \diam(A))$, let $(p, q)$ be the first such edge.
    Namely, there exist edges $(x_i, p_1), (p_1, p_2), \dots, (p_{k-1}, p_k), (p_k, p) \in P(x_i, x_{i + 1}) \cap B(A, \diam(A))$, but $(p, q) \notin P(x_i, x_{i + 1}) \cap B(A, \diam(A))$.
    Then 
    \begin{align*}
        &\qquad w\Big(P(x_i, x_{i + 1}) \cap B(A, \diam(A))\Big) \\
        &\geq \dist(x_i, p_1) + \sum_{j = 1}^{k - 1} \dist(p_j, p_{j + 1}) + \dist(p_k, p) \\
        &\geq \dist(x_i, p) &&\text{by triangle inequality}\\
        &\geq \diam(A) &&\text{since } x_i \in A \text{ and } p \notin B(A, \diam(A))\\
        &\geq \dist(x_i, x_{i + 1}) &&\text{since } x_i, x_{i + 1} \in A.
    \end{align*}
    
    In conclusion, for every $i \in [s]$ we have $ w\Big(P(x_i, x_{i + 1}) \cap B(A, \diam(A))\Big) \geq \dist(x_i, x_{i + 1})$.
    Hence,
    \begin{align*}
        2 w\Big(B(A, \diam(A)) \cap T^*\Big)
        = \sum_{i = 1}^s w\Big(P(x_i, x_{i + 1}) \cap B(A, \diam(A))\Big) 
        \geq \sum_{i = 1}^s \dist(x_i, x_{i + 1}) 
        \geq s \cdot \rho,
    \end{align*}
    where the last inequality holds since $N$ is $\rho$-packing for $A$.
    Rearranging concludes the proof.
\end{proof}

We are now ready to prove \Cref{lemma:improved_alg_correctness}.

\begin{proof}[Proof of \Cref{lemma:improved_alg_correctness}]
    Let $\Psi$ be the collection of pairs of net points constructed in Line \ref{line:WSPC}.
    Denote $\Phi := \{(\wspcball{u}, \wspcball{v}) \colon (u, v) \in \Psi\}$.
    Then $\Phi$ is a $32$-WSPC of $X$ by \Cref{lemma:WSPC_doubling}.
    For each pair $(u, v) \in \Psi$, 
    Assume $(p, q)$ is the edge that \Cref{alg:MST_improved} adds to $G$ in Line \ref{line:U_V_ends}. 
    Ideally, $\dist(p, q)$ is a $(1 + \epsilon)$-approximation of the distance between $\wspcball{u}$ and $\wspcball{v}$;
    otherwise, we consider bounding the difference between $\dist(p, q)$ and $\dist(\wspcball{u}, \wspcball{v})$.
    This motivates us to define the error incurred by the pair $(u, v)$ as 
    \begin{align*}
        \err(u, v) := \max\left\{0, \dist(p, q) - (1 + \epsilon) \dist\left(\wspcball{u}, \wspcball{v}\right)\right\}.
    \end{align*}
    Our goal is to show that 
    \begin{equation}\label{eqn:err_bound_goal}
        \sum_{(u, v) \in \Psi} \err(u, v) \leq 2^{O(\ddim)} \epsilon \cdot \MST(X).
    \end{equation}

    Let us first assume that \eqref{eqn:err_bound_goal} holds, and see how it implies \Cref{lemma:improved_alg_correctness}.
    For the $i$-th edge $e_i^*$ of $T^*$, let $(u_i, v_i) \in \Psi$ be an arbitrary pair such that $e_i^* \in \wspcball{u_i} \times \wspcball{v_i}$,
    and assume \Cref{alg:MST_improved} adds $(p_i, q_i)$ to $G$ in Line \ref{line:U_V_ends} for $(u_i, v_i)$.

    We first show that $\{(p_i, q_i)\}_{i = 1}^{n - 1}$ is a spanning tree of $G$.
    We cannot directly apply \Cref{lemma:WSPC_MST} \eqref{it:WSPC_spanning} to $\{(p_i, q_i)\}_{i = 1}^{n - 1}$ and $\Phi$, as $(p_i, q_i)$ is not necessarily contained in $\wspcball{u_i} \times \wspcball{v_i}$.
    The idea is to enlarge the radius of each pair $(\wspcball{u}, \wspcball{v}) \in \Phi$ by a constant factor, so that $(p_i, q_i)$ is contained in the expanded pair, and all expanded pairs still form an $\Omega(1)$-WSPC.
    Specifically, for each pair $(u, v) \in \Psi$, define the expanded pairs $\Phi'$ as follows:
    If $\lev(u) = \lev(v) = 0$, then add $(\{u\}, \{v\})$ to $\Phi'$;
    otherwise, add the pair $(B(u, 2^{\lev(u) + 2}), B(v, 2^{\lev(v) + 2}))$ to $\Phi'$.
    We claim that $\Phi'$ is a $7$-WSPC of $X$.
    Indeed, for each pair of points $(x, y) \in \binom{X}{2}$, assume $(x, y) \in \wspcball{u} \times \wspcball{v}$ for some $(u, v) \in \Psi$.
    Then $(x, y) \in B(u, 2^{\lev(u) + 2}) \times B(v, 2^{\lev(v) + 2})$.
    Hence, \Cref{def:WSPC} \eqref{it:WSPC_covering} is satisfied.
    Regarding \Cref{def:WSPC} \eqref{it:WSPC_separation}, for each $(u, v) \in \Psi$ where $u, v$ are not both at level-$0$,
    by \Cref{lemma:WSPC_doubling} \eqref{it:separation} we have $\dist(u, v) > 66 \max\{2^{\lev(u)}, 2^{\lev(v)}\}$.
    Thus, \begin{align*}
        \dist(B(u, 2^{\lev(u) + 2}), B(v, 2^{\lev(v) + 2})) \geq \dist(u, v) - 2^{\lev(u) + 2} - 2^{\lev(v) + 2}
        > 7 \max\{2^{\lev(u) + 3}, 2^{\lev(v) + 3}\},
    \end{align*}
    indicating each $(B(u, 2^{\lev(u) + 2}), B(v, 2^{\lev(v) + 2})) \in \Phi'$ is $7$-separated.
    Therefore, $\Phi'$ is a $7$-WSPC of $X$.
    By the construction of $(p_i, q_i)$ in Lines \ref{line:low_level}, \ref{line:edge_middle_lev} and \ref{line:edge_high_lev} and \Cref{lemma:descendants_packing_covering} \eqref{it:desc_packing}, we have $(p_i, q_i) \in B(u, 2^{\lev(u) + 2}) \times B(v, 2^{\lev(v) + 2})$.
    Applying \Cref{lemma:WSPC_MST} \eqref{it:WSPC_spanning} to $\{(p_i, q_i)\}_{i = 1}^{n - 1}$ and $\Phi'$, we conclude that $\{(p_i, q_i)\}_{i = 1}^{n - 1}$ is a spanning tree of $G$. 
    Therefore,
    \begin{align*}
        \MST(G) 
        &\leq \sum_{i = 1}^{n - 1} \dist(p_i, q_i) 
        \leq \sum_{i = 1}^{n - 1} \Big((1 + \epsilon) \dist(\wspcball{u_i}, \wspcball{v_i}) + \err(u_i, v_i) \Big) \\
        &\leq \sum_{i = 1}^{n - 1} (1 + \epsilon) w(e_i^*) + \sum_{(u, v) \in \Psi} \err(u, v) 
        \leq (1 + \epsilon) \MST(X) + 2^{O(\ddim)} \epsilon \cdot \MST(X) \\
        &= (1 + 2^{O(\ddim)} \epsilon) \MST(X),
    \end{align*}
    proving \Cref{lemma:improved_alg_correctness}.

    In our following analysis we focus on the proof of \eqref{eqn:err_bound_goal}.
    In particular, we upper bound the error for each of the three different types of levels.
    Recall that 
    $(u, v) \in \Psi$ is called a low-level (resp. middle-level, high-level) pair if 
    $\max\{\lev(u), \lev(v)\} < \ell^* - 10 \log(1/\epsilon)$ (resp. $\in [\ell^* - 10 \log(1/\epsilon), \ell^* + 10 \log(1/\epsilon)]$, $> \ell^* + 10 \log(1/\epsilon)$).

    \paragraph{Low levels.}
    Assume $(u, v) \in \Psi$ is a low-level pair, and wlog that $\lev(u) = \max\{\lev(u), \lev(v)\} = \ell < \ell^* - 10 \log(1/\epsilon)$.
    We trivially bound the error of $(u, v)$ by $
        \err(u, v) \leq 2^{\lev(u)} + 2^{\lev(v)} \leq 2^{\ell + 1}.
    $
    Thus the total error of low level clusters can be bounded by 
\begin{align}
    \mathrm{Err}_{\mathrm{low}}
    &\leq \sum_{\ell < \ell^* - 10 \log(1/\epsilon)} \sum_{u \in N_\ell} 2^{\ell + 1} \cdot |\{v \colon (u, v) \in \Psi\}| \notag
    \ineqref{ineq:degree_of_U}{I1} 2^{O(\ddim)} \sum_{\ell < \ell^* - 10 \log(1/\epsilon)} 2^\ell |N_\ell| \notag\\
    &= 2^{O(\ddim)} \sum_{\ell < \ell^* - 10 \log(1/\epsilon)} 2^{\ell / 2} \cdot 2^{\ell/2} |N_\ell| \notag
    \ineqref{ineq:by_def_of_lstar}{I2} 2^{O(\ddim)} \sum_{\ell < \ell^* - 10 \log(1/\epsilon)} 2^{\ell / 2} \cdot 2^{\ell^*/2} |N_{\ell^*}| \notag\\
    &\ineqref{ineq:geometric_series}{I3} 2^{O(\ddim)} \epsilon^5 2^{\ell^*} |N_{\ell^*}| \notag
    \ineqref{ineq:net_size_times_density}{I4} 2^{O(\ddim)} \epsilon^5 \MST(X). \label{eqn:Err_low}
\end{align}
In (I1), we use \eqref{it:bounded_deg} of \Cref{lemma:WSPC_doubling} to get $|\{v \colon (u, v) \in \Psi\}| \leq 2^{O(\ddim)}$.
    In (I2) we use the definition of $\ell^*$ in \eqref{eqn:def_l*}.
In (I3), we use that
\[
\sum_{\ell < \ell^* - 10 \log(1/\epsilon)} 2^{\ell/2}
= O\!\left(2^{(\ell^* - 10 \log(1/\epsilon))/2}\right)
= O\!\left(\epsilon^5 2^{\ell^*/2}\right),
\]
because the sum is geometric.
(I4) follows from \Cref{lemma:net_size_MST} by setting $A = X$ and $N = N_{\ell^*}$.

    \paragraph{Middle levels.}
    Assume $(u, v) \in \Psi$ is a middle-level pair.
    Assume wlog that $\lev(u) = \max\{\lev(u), \lev(v)\} = \ell$, where $\ell^* - 10 \log(1/\epsilon) \leq \ell \leq \ell^* + 10 \log(1/\epsilon)$.
    \sloppy
    Recall that $p \in \desc_{\lev(u) - k}(u)$ and $q \in \desc_{\lev(v) - k}(v)$, where the set of descendants $\desc_{\lev(u) - k}(u)$ and $\desc_{\lev(v) - k}(v)$ are either dense enough (i.e., $k = \log(1/\epsilon)$) or large enough (i.e., $\max\{|\desc_{\lev(u) - k - 1}(u)|, |\desc_{\lev(v) - k - 1}(v)|\} \geq \gamma$ for some $k < \log(1/\epsilon)$).

    If $k = \lceil \log(1/\epsilon) \rceil$, then
    by \Cref{lemma:descendants_packing_covering} \eqref{it:desc_covering}, every point of $\wspcball{u}$ and $\wspcball{v}$ is within distance $(\epsilon / 4) 2^{\lev(u)}$ and $(\epsilon / 4) 2^{\lev(v)}$ from a net point in $\desc_{\lev(u) - k}(u)$ and $\desc_{\lev(v) - k}(v)$, respectively.
    Let $(x^*, y^*) := \argmin_{x \in \wspcball{u}, y \in \wspcball{v}} \dist(x, y),$ i.e., $\dist(x^*, y^*)$ realizes $\dist(\wspcball{u}, \wspcball{v})$.
    Let $p^* \in \desc_{\lev(u) - k}(u)$ such that $\dist(x^*, p^*) \leq (\epsilon/4) 2^{\lev(u)}$ and $q^* \in \desc_{\lev(v) - k}(v)$ such that $\dist(y^*, q^*) \leq (\epsilon/4) 2^{\lev(v)}$.
    Then 
    \begin{align*}
        \dist(p, q) 
        &\leq \dist(p^*, q^*) && \text{by definition of } p, q \text{ in Line \ref{line:edge_middle_lev}}\\
        &\leq \dist(x^*, y^*) + \dist(x^*, p^*) + \dist(y^*, q^*) && \text{by triangle inequality}\\
        &\leq \dist(\wspcball{u}, \wspcball{v}) + (\epsilon/4) 2^{\lev(u)} + (\epsilon/4) 2^{\lev(v)} \\
        &\leq (1 + \epsilon) \dist(\wspcball{u}, \wspcball{v}) && \text{since } \Phi \text{ is } 32\text{-WSPC}.
    \end{align*}
    Therefore, by definition $\err(u, v) = 0.$

   We now focus on the case that $\max\{|\desc_{\lev(u) - k - 1}(u)|, |\desc_{\lev(v) - k - 1}(v)|\} > \gamma$:
   Similar calculation gives
   \begin{align*}
        \dist(p, q) 
        &\leq \dist(p^*, q^*) && \text{by definition of } p, q \text{ in Line \ref{line:edge_middle_lev}}\\
        &\leq \dist(x^*, y^*) + \dist(x^*, p^*) + \dist(y^*, q^*) && \text{by triangle inequality}\\
        &\leq \dist(\wspcball{u}, \wspcball{v}) + 2^{\lev(u) - k - 2} + 2^{\lev(v) - k - 2}
    \end{align*}
    Therefore, \begin{equation}\label{eqn:err_UV_bound}
        \err(u, v) \leq 2^{\lev(u) - k - 2} + 2^{\lev(v) - k - 2}
        \leq 2^{\lev(u) - k - 1}
        \leq 2^{\lev(v) - k},
    \end{equation}
    where the last inequality follows from \eqref{it:bounded_lev_diff} of \Cref{lemma:WSPC_doubling}.

If $|\desc_{\lev(u) - k - 1}(u)| > \gamma$,
then since $\desc_{\lev(u) - k - 1}(u)$ is $2^{\lev(u) - k - 4}$-packing for $B(u, 2^{\lev(u) + 2})$ (\Cref{lemma:descendants_packing_covering} \eqref{it:desc_packing}), by \Cref{lemma:net_size_MST} (applied to $A = B(u, 2^{\lev(u) + 2})$ and $N = \desc_{\lev(u) - k - 1}$) we obtain
\[
2^{\lev(u) - k - 4}
\leq \frac{2\, w(B(u, 2^{\lev(u) + 4}) \cap T^*)}{|\desc_{\lev(u) - k - 1}(u)|}
\leq \frac{2}{\gamma} w(B(u, 2^{\lev(u) + 4}) \cap T^*).
\]
If $|\desc_{\lev(v) - k - 1}(v)| > \gamma$, analogously we have 
\[
2^{\lev(v) - k - 4}
\leq \frac{2}{\gamma} w(B(v, 2^{\lev(v) + 4}) \cap T^*).
\]
We therefore conclude that 
\[
\err(u, v) 
\leq 2^{\lev(v) - k} 
\leq
\frac{32}{\gamma} \Big(
    w(B(u, 2^{\lev(u) + 4}) \cap T^*) + w(B(v, 2^{\lev(v) + 4}) \cap T^*)
\Big).
\]

    Thus, the total error incurred by all middle level pairs is bounded by 
    \begin{align}
        \mathrm{Err}_{\mathrm{mid}}
        &\leq \sum_{\substack{(u, v) \in \Psi \text{ is a} \\ \text{middle-level pair}}}
        \frac{32}{\gamma} \Big(
        w(B(u, 2^{\lev(u) + 4}) \cap T^*) + w(B(v, 2^{\lev(v) + 4}) \cap T^*)
        \Big). \notag \\
        &\leq \sum_{\ell = \ell^* - 10 \log(1/\epsilon) - 1}^{\ell^* + 10 \log(1/\epsilon)} 
        \sum_{u \in N_\ell}
        \frac{32}{\gamma} w(B(u, 2^{\ell + 4}) \cap T^*)
        \cdot |\{v \colon (u, v) \in \Psi\}| \notag \\
        &\leq \frac{2^{O(\ddim)}}{\gamma} 
        \sum_{\ell = \ell^* - 10 \log(1/\epsilon) - 1}^{\ell^* + 10 \log(1/\epsilon)} 
        \sum_{u \in N_\ell}
        \sum_{e \in B(u, 2^{\ell + 4}) \cap T^*} w(e) \label{eqn:num_pairs_contain_cluster}\\
        &= \frac{2^{O(\ddim)}}{\gamma} 
        \sum_{e \in T^*} w(e) \sum_{\ell = \ell^* - 10 \log(1/\epsilon) - 1}^{\ell^* + 10 \log(1/\epsilon)} |\{u \in N_\ell \colon e \in B(u, 2^{\ell + 4})\}| \notag\\
        &\leq \frac{2^{O(\ddim)}}{\gamma} 20 \log(1/\epsilon) \sum_{e \in T^*} w(e) \label{eqn:num_clusters_contain_edge}\\
        &\leq 2^{O(\ddim)} \epsilon\cdot \MST(X). \notag
    \end{align}
    In \eqref{eqn:num_pairs_contain_cluster}, we use $|\{v \colon (u, v) \in \Psi\}| \leq 2^{O(\ddim)}$ by \eqref{it:bounded_deg} of \Cref{lemma:WSPC_doubling}.
    In \eqref{eqn:num_clusters_contain_edge}, we use
    that for every edge $e$ and every level $\ell$, $|\{u \in N_\ell \colon e \in B(u, 2^{\ell + 4})\}| \leq 2^{O(\ddim)}$.
    We explain the reason below:
    Denote $e = (x, y)$;
    if $u \in N_\ell$ satisfies $e \in B(u, 2^{\ell + 4})$, then either $\dist(u, x) \leq 2^{\ell + 4}$ or $\dist(u, y) \leq 2^{\ell + 4}$.
    Equivalently, $u \in N_\ell \cap (B(x, 2^{\ell + 4}) \cup B(y, 2^{\ell + 4}))$,
    so by 
    \Cref{lemma:packing} the number of such $u$ is bounded by $2^{O(\ddim)}$.
    The last inequality follows from our choice $\gamma = \epsilon^{-1} \log(1/\epsilon)$.

    \paragraph{High levels.}
    Assume $(u, v) \in \Psi$ is a high-level pair.
    Assume wlog that $\lev(u) = \max\{\lev(u), \lev(v)\} = \ell$, where $\ell > \ell^* + 10 \log(1/\epsilon)$.
    Analogous to the middle levels, we focus on the case that $\max\{|\desc_{\lev(u) - k - 1}(u)|, |\desc_{\lev(v) - k - 1}(v)|\}  > \tau_\ell$ (as $\err(u, v) = 0$ in the other case).
    By \eqref{eqn:err_UV_bound}, 
    $\err(u, v) \leq  2^{\lev(v) - k}$.

    If $|\desc_{\lev(u) - k - 1}(u)|  > \tau_\ell$, then by \Cref{lemma:descendants_packing_covering,lemma:net_size_MST} we have 
    \[
    2^{\lev(u) - k - 4}
    \leq \frac{2}{\tau_\ell} w(B(u, 2^{\lev(u) + 4}) \cap T^*).
    \]
    Similarly, if $|\desc_{\lev(v) - k - 1}(v)|  > \tau_\ell$, then 
    \[
    2^{\lev(v) - k - 4}
    \leq \frac{2}{\tau_\ell} w(B(v, 2^{\lev(v) + 4}) \cap T^*)
    \leq \frac{2}{\tau_{\lev(v)}} w(B(v, 2^{\lev(v) + 4}) \cap T^*).
    \]
    In conclusion, 
    \begin{align*}
        \err(u, v) 
        \leq 
            \frac{32}{\tau_{\lev(u)}} w(B(u, 2^{\lev(u) + 4}) \cap T^*)
            + \frac{32}{\tau_{\lev(v)}} w(B(v, 2^{\lev(v) + 4}) \cap T^*)
    \end{align*}

    Therefore, the total error incurred by high levels can be bounded by 
    \begin{align*}
        \mathrm{Err}_{\mathrm{high}}
        &\leq \sum_{\substack{(u, v) \in \Psi \text{ is a} \\ \text{high-level pair}}} 
        \left(
            \frac{32}{\tau_{\lev(u)}} w(B(u, 2^{\lev(u) + 4}) \cap T^*)
            + \frac{32}{\tau_{\lev(v)}} w(B(v, 2^{\lev(v) + 4}) \cap T^*)
        \right)\\
        &\leq \sum_{\ell \geq \ell^* + 10 \log(1/\epsilon)}
        \sum_{u \in N_\ell}
        \frac{32}{\tau_\ell} w(B(u, 2^{\ell + 4}) \cap T^*)
        \cdot |\{v \colon (u, v) \in \Psi\}| \\
        &\leq 2^{O(\ddim)} \sum_{\ell \geq \ell^* + 10 \log(1/\epsilon)} \frac{1}{\tau_\ell}
        \sum_{u \in N_\ell} \sum_{e \in B(u, 2^{\ell + 4}) \cap T^*} w(e) \\
        &= 2^{O(\ddim)}
        \sum_{e \in T^*} w(e) \sum_{\ell \geq \ell^* + 10 \log(1/\epsilon)} \frac{1}{\tau_\ell} |\{u \in N_\ell \colon e \in B(u, 2^{\ell + 4})\}| \\
        &\leq 2^{O(\ddim)} 
        \sum_{e \in T^*} w(e) \sum_{\ell \geq \ell^* + 10 \log(1/\epsilon)} \frac{1}{\tau_\ell} \\
        &\leq 2^{O(\ddim)} 
        \sum_{e \in T^*} w(e) \sum_{j = 10 \log(1/\epsilon)}^\infty 1.1^{-j} \\
        &\leq 2^{O(\ddim)} \epsilon \cdot \MST(X).
    \end{align*}
    For the last inequality, we use that
\[
\sum_{j = 10 \log(1/\epsilon)}^\infty 1.1^{-j}
= O\!\left(1.1^{-10 \log(1/\epsilon)}\right)
\leq O(\epsilon).
\]

\paragraph{Wrapping up.}
    The total error is bounded by 
    \begin{align*}
        \mathrm{Err}_{\mathrm{low}}
        + \mathrm{Err}_{\mathrm{mid}}
        + \mathrm{Err}_{\mathrm{high}}
        \leq 2^{O(\ddim)} \epsilon \MST(X),
    \end{align*}
    concluding the proof of \eqref{eqn:err_bound_goal}.
\end{proof}

\subsection{Time Complexity}
\label{sec:main:time}
\begin{lemma}\label{lemma:improved_alg_time}
    \Cref{alg:MST_improved} runs in time $2^{O(\ddim)} n (\log n + \epsilon^{-1} \log^4(1/\epsilon))$.
\end{lemma}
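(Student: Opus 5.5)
We bound the running time of \Cref{alg:MST_improved} line by line. Preprocessing (the aspect-ratio reduction of \Cref{lemma:bounded_spread} and the $\ddim$ approximation) gives $\Delta \le n^{10}/\epsilon$, so $\log\Delta = O(\log n + \log(1/\epsilon))$. Line~\ref{line:net_hierarchy} then costs $2^{O(\ddim)} n\log\Delta$ by \Cref{lemma:net_hierarchy_construction}. Line~\ref{line:critical_level} costs $O(L + n)$ once the net sizes are recorded. Line~\ref{line:WSPC} with $t=32$ costs $2^{O(\ddim)} n\log\Delta$ by \Cref{lemma:WSPC_doubling}, and the same lemma gives $|\Psi| \le 2^{O(\ddim)}n\log\Delta$. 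The graph $G$ has at most $|\Psi|$ edges, so Line~\ref{line:return} can use any $O(m\log m)$ MST routine. To avoid an extra $\log$ factor, we would instead use a linear-time-style bound $O(m\,\alpha(m))$ or note that $m = 2^{O(\ddim)}n\log\Delta$. The $O(\log(1/\epsilon))$ term from $\log\Delta$ is absorbed into $\epsilon^{-1}\log^4(1/\epsilon)$. Low-level pairs cost $O(1)$ each.

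For a middle or high pair, finding $k$ requires enumerating $\desc_{\lev(u)-j}(u)$ for $j=0,1,\dots$. We go level by level through child pointers, deduplicating, and stop as soon as a set exceeds the threshold $\theta$ (either $\gamma$ or $\tau_\ell$) or $j$ exceeds $\log(1/\epsilon)+1$. Each step expands at most $\theta$ points by $2^{O(\ddim)}$ children (\Cref{lemma:num_children_parent}). The search therefore costs $2^{O(\ddim)}\theta\log(1/\epsilon)$, and the closest-pair brute force costs $\theta^2$. Done naively this gives $\gamma^2=\epsilon^{-2}\log^2(1/\epsilon)$ per middle pair, which is too much. So the main obstacle is the middle levels, and the key step is to bound their total cost by charging to $|N_\ell|$ and $\ell^*$.

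For the middle levels, I would use the Arya--Mount style closest pair on nets: compute the bichromatic closest pair between sets of size $\le\gamma$ approximately in $2^{O(\ddim)}\gamma\,\mathrm{polylog}(1/\epsilon)$ time by a simultaneous top-down search in the two hierarchies with pruning. Alternatively, and more simply, observe that the computation at $(u,v)$ depends only on the descendant sets. We can then bound the count of middle pairs: $\sum_{\ell\text{ middle}} 2^{O(\ddim)}|N_\ell|$. Using $2^{\ell/2}|N_\ell|\le 2^{\ell^*/2}|N_{\ell^*}|$ gives $|N_\ell|\le 2^{(\ell^*-\ell)/2}|N_{\ell^*}|\le \epsilon^{-5}|N_{\ell^*}|$, which is too weak. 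So I would instead bound by $|N_\ell|\le n$ per level. That gives $2^{O(\ddim)} n\log(1/\epsilon)$ middle pairs, each needing $\tilde O(\gamma)=\tilde O(\epsilon^{-1})$ work. The work per pair must therefore be near-linear in $\gamma$, which I expect to obtain via the net-based search restricted to the two descendant sets: descend level by level keeping only candidate pairs within $(1+O(2^{-j}))$ of the current best, with $2^{O(\ddim)}$ candidates per point per level. Summed over $\log(1/\epsilon)$ refinement levels with an extra $\log$ factor for the search, this gives $2^{O(\ddim)}\gamma\log^2(1/\epsilon)$ per pair. The total is then $2^{O(\ddim)}n\epsilon^{-1}\log^4(1/\epsilon)$, matching the target; verifying this per-pair bound is the step I expect to be hardest.

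For high levels, the same per-pair bound with threshold $\tau_\ell$ costs $2^{O(\ddim)}\tau_\ell^2$ even by brute force, and there are $2^{O(\ddim)}|N_\ell|$ pairs at level $\ell$. Write $j=\ell-\ell^*$. By the definition of $\ell^*$ we have $|N_\ell|\le 2^{-j/2}|N_{\ell^*}|\le 2^{-j/2}n$. The total is therefore $2^{O(\ddim)}n\sum_j 1.21^{j}2^{-j/2}=2^{O(\ddim)}n$, since $1.21<\sqrt2$. Adding up the preprocessing, low, middle, and high contributions gives the claimed bound.
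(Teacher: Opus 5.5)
Your bounds for preprocessing, low-level pairs, the search for $k$, and the high levels essentially match the paper's argument. The middle-level closest-pair step, which you flag as hardest, has a genuine gap.

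\Cref{alg:MST_improved} computes $(p,q)$ in Line~\ref{line:edge_middle_lev} as the exact $\argmin$ over $\desc_{\lev(u)-k}(u)\times\desc_{\lev(v)-k}(v)$. The correctness proof uses $\dist(p,q)\le\dist(p^*,q^*)$ for exactly this minimizer. Replacing it by an Arya--Mount style approximate search or a pruned simultaneous descent therefore analyzes a different algorithm, whose correctness would also need re-checking.

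Moreover, the per-pair bound $2^{O(\ddim)}\gamma\log^2(1/\epsilon)$ is not established. The claim that only $2^{O(\ddim)}$ candidates per point per level survive pruning is false in general, because the doubling dimension does not control thin annuli. Take two parallel segments of length $2^{\ell+1}$ in the plane at distance about $70\cdot 2^{\ell}$. The pairs of level-$(\ell-j)$ net points within additive $O(2^{\ell-j})$ of the optimum number about $2^{1.5j}$, i.e.\ about $2^{j/2}$ partners per point. At $j\approx\log(1/\epsilon)$ this is already $\epsilon^{-3/2}\gg\gamma$ pairs.

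The missing idea is that the plain enumeration in the algorithm is already affordable, once you bound one factor by the local number of points rather than by $\gamma$. Charge a middle pair $(u,v)$ to the endpoint $u$ of larger level $\ell$. The set $\desc_{\ell-k}(u)$ consists of distinct points of $X$ lying in $B(u,2^{\ell+2})$ (\Cref{lemma:descendants_packing_covering}). Hence the enumeration costs at most $\gamma\cdot|B(u,2^{\ell+2})|$, not $\gamma^2$. Swapping the order of summation and using that $N_\ell$ is $2^{\ell-3}$-packing gives
\[
\sum_{u\in N_\ell}|B(u,2^{\ell+2})|=\sum_{x\in X}\bigl|N_\ell\cap B(x,2^{\ell+2})\bigr|\le 2^{O(\ddim)}n .
\]
There are $2^{O(\ddim)}$ pairs per $u$ (\Cref{lemma:WSPC_doubling}) and $O(\log(1/\epsilon))$ middle levels, so the closest-pair work totals $2^{O(\ddim)}\gamma n\log(1/\epsilon)$. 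Your search bound, summed over the $2^{O(\ddim)}n\log(1/\epsilon)$ middle pairs (with a $\log\gamma$ factor for deduplication, as in the paper), stays within $2^{O(\ddim)}n\epsilon^{-1}\log^4(1/\epsilon)$. Individual pairs may indeed cost $\gamma^2$, but only where the ball is heavy, and the total ball mass per level is $2^{O(\ddim)}n$.

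A minor point: an $O(m\log m)$ MST routine on $G$ would cost an extra logarithmic factor. Use Prim's algorithm with Fibonacci heaps, which runs in $O(m+n\log n)$ time, as the paper does.
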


\begin{proof}
    By \Cref{lemma:net_hierarchy_construction}, $\nets$ can be constructed in $2^{O(\ddim)} n \log(n/\epsilon)$ time in Line \ref{line:net_hierarchy}.
    Finding the level $\ell^*$ takes $O(n \log(n/\epsilon))$ time in Line \ref{line:critical_level}.
    In Line \ref{line:WSPC}, $\Psi$ can be constructed in $2^{O(\ddim)} n \log(n/\epsilon)$ time by \Cref{lemma:WSPC_doubling}.
    Hence, 
    the preprocessing stage (Lines \ref{line:net_hierarchy}-\ref{line:G_initial}) takes $2^{O(\ddim)} n \log(n/\epsilon)$ time.
    Below we focus on the time complexity of constructing $G$ (Lines \ref{line:U_V_starts}-\ref{line:U_V_ends}).
    We bound the time complexity of the three types of levels respectively.

    \paragraph{Low levels.}
    For every low-level pair $(u, v)$, processing $(u, v)$ takes $O(1)$ time.
    Therefore, the total time complexity of low levels is $O(|\Psi|) \leq 2^{O(\ddim)} n \log(n / \epsilon)$.

    \paragraph{Middle levels.}
    Let $(u, v) \in \Psi$ be a middle-level pair.
    To find $k$ in Line \ref{line:middle_level}, the algorithm maintains $\desc_{\lev(u) - j}(u)$ for each $1 \leq j \leq \log(1/\epsilon)$ --- 
    given $\desc_{\lev(u) - j + 1}(u)$, construct 
    $\desc_{\lev(u) - j}(u) := \bigcup_{x \in \desc_{\lev(u) - j + 1}(u)} \child(x)$, until $|\desc_{\lev(u) - j}(u)| > \gamma$.
    By \Cref{lemma:num_children_parent}, $|\child(x)| \leq 2^{O(\ddim)}$, and the net hierarchy data structure stores pointers from $x$ to all its children.
    Hence, each $\desc_{\lev(u) - j}(u)$ can be constructed in $2^{O(\ddim)} \gamma \log \gamma$ time, so the time complexity of finding $k$ (Line \ref{line:middle_level}) is $2^{O(\ddim)} \gamma \log \gamma \log(1/\epsilon)$.

    The time complexity of finding $(p, q) \in \desc_{\lev(u) - k}(u) \times \desc_{\lev(v) - k}(v)$ in Line \ref{line:edge_middle_lev} is $|\desc_{\lev(u) - k}(u)| \cdot |\desc_{\lev(v) - k}(v)| \leq \gamma \cdot |B(u, 2^{\lev(u) + 2})|$.
    Therefore, the total time complexity of middle levels is 
    \begin{align*}
        &\quad \sum_{\ell = \ell^* - 10 \log(1/\epsilon)}^{\ell^* + 10 \log(1/\epsilon)} 
        \sum_{u \in N_\ell}
        \Big(
            2^{O(\ddim)} \gamma \log \gamma \log(1/\epsilon)
            + \gamma \cdot |B(u, 2^{\lev(u) + 2})|
        \Big)
        \cdot |\{v \colon (u, v) \in \Psi\}| \\
        &\leq 2^{O(\ddim)} \gamma \log \gamma \log(1/\epsilon) \sum_{\ell = \ell^* - 10 \log(1/\epsilon)}^{\ell^* + 10 \log(1/\epsilon)} |N_\ell|
        + 2^{O(\ddim)} \gamma \sum_{\ell = \ell^* - 10 \log(1/\epsilon)}^{\ell^* + 10 \log(1/\epsilon)} \sum_{u \in N_\ell} |B(u, 2^{\ell + 2})| \\
        &\leq 2^{O(\ddim)} n \gamma \log \gamma \cdot 20 \log^2(1/\epsilon) 
        + 2^{O(\ddim)} \gamma \sum_{\ell = \ell^* - 10 \log(1/\epsilon)}^{\ell^* + 10 \log(1/\epsilon)} \sum_{x \in X} |\{u \in N_\ell \colon \dist(x, u) \leq 2^{\ell + 2}\}|\\
        &\leq 2^{O(\ddim)} n \gamma \log \gamma \cdot 20 \log^2(1/\epsilon)  + 2^{O(\ddim)} \gamma n \cdot 20 \log(1/\epsilon)  \\
        &\leq 2^{O(\ddim)} n \epsilon^{-1} \log^4(1/\epsilon).
    \end{align*}
    The last inequality follows from our choice of $\gamma = \epsilon^{-1} \log(1/\epsilon)$.

    \paragraph{High levels.}
    Let $(u, v)$ be a high-level pair, and assume that
    $\lev(u) = \max\{\lev(u), \lev(v)\} = \ell$, where $\ell > \ell^* + 10 \log(1/\epsilon)$.
    Analogous to the middle levels,
    the time complexity of finding $k$ in Line \ref{line:high_level} is $2^{O(\ddim)} \tau_\ell \log \tau_\ell \log(1/\epsilon)$, 
    and the time complexity of computing $(p, q) \in \desc_{\lev(u) - k}(u) \times \desc_{\lev(v) - k}(v)$ in Line \ref{line:edge_high_lev} is $|\desc_{\lev(u) - k}(u)| \cdot |\desc_{\lev(v) - k}(v)| \leq \tau_\ell^2$.
    Hence, the time complexity of high levels is 
    \begin{align*}
        &\quad \sum_{\ell > \ell^* + 10\log(1/\epsilon)}
        \sum_{u \in N_\ell}
        \Big(
            2^{O(\ddim)} \tau_\ell \log \tau_\ell \log(1/\epsilon) 
            + \tau_\ell^2
        \Big)
        \cdot |\{v \colon (u, v) \in \Psi\}| \\
        &\leq 2^{O(\ddim)} \log(1/\epsilon) \sum_{\ell > \ell^* + 10\log(1/\epsilon)} \tau_\ell^2 \cdot |N_\ell| \\
        &= 2^{O(\ddim)} \log(1/\epsilon) \sum_{\ell > \ell^* + 10\log(1/\epsilon)} \frac{\tau_\ell^2}{2^{\ell/2}} \cdot 2^{\ell/2} |N_\ell| \\
        &\leq 2^{O(\ddim)} \log(1/\epsilon) \sum_{\ell > \ell^* + 10 \log(1/\epsilon)} \frac{\tau_\ell^2}{2^{\ell/2}} \cdot 2^{\ell^*/2} |N_{\ell^*}| &&\text{by the definition of }\ell^* \\
        &= 2^{O(\ddim)} \log(1/\epsilon) |N_{\ell^*}| \sum_{\ell > \ell^* + 10 \log(1/\epsilon)} \left(\frac{1.21}{\sqrt{2}}\right)^{\ell - \ell^*} &&\text{by the choice of } \tau_\ell \\
        &\leq 2^{O(\ddim)} n \log(1/\epsilon).
    \end{align*}

    Combining the three cases above, the total complexity of constructing the graph $G$ (Lines \ref{line:U_V_starts}-\ref{line:U_V_ends}) is bounded by $2^{O(\ddim)} n (\log n + \epsilon^{-1} \log^4(1/\epsilon))$.

    Finally, note that $G$ has $n$ vertices and $m = |\Psi| = 2^{O(\ddim)} n \log(n/\epsilon)$ edges.
    The MST of $G$ can be computed in time $O(m + n \log n) = 2^{O(\ddim)} n \log(n/\epsilon)$, using Prim's algorithms implemented with a Fibonacci Heap~\cite{fredman1987fibonacci}.

    In conclusion the total time complexity of \Cref{alg:MST_improved} is $2^{O(\ddim)} n (\log n + \epsilon^{-1} \log^4(1/\epsilon))$.
\end{proof}

\subsection{Proof of \Cref{thm:MST_alg_improved}}

\begin{proof}[Proof of \Cref{thm:MST_alg_improved}]
    One can first compute an $O(1)$-approximation of the doubling dimension $\ddim = \ddim(X)$ in time $2^{O(\ddim)} n \log(n/\epsilon)$ using \Cref{lemma:ddim_apx}.
    Then run \Cref{alg:MST_improved} with parameter $\epsilon' = \epsilon/2^{O(\ddim)}$.
    By \Cref{lemma:improved_alg_correctness}, the algorithm returns a $(1 + \epsilon)$-approximate MST.
    By \Cref{lemma:improved_alg_time}, the time complexity is 
    \begin{align*}
        2^{O(\ddim)} n (\log n + \epsilon^{-1} 2^{O(\ddim)} \log^4(2^{O(\ddim)}/\epsilon))
        = 2^{O(\ddim)} n (\log n + \epsilon^{-1} \log^4(1/\epsilon)).
    \end{align*}
\end{proof}

\section{Conclusion}
\label{sec:conclusion}

In this paper, we gave a deterministic algorithm for computing a $(1+\epsilon)$-approximation to the minimum spanning tree of an $n$-point metric space of doubling dimension $\ddim$ in time
$
2^{O(\ddim)} n \bigl(\log n + \epsilon^{-1}\log^4(1/\epsilon)\bigr).
$
For bounded doubling dimension, this improves the previous $\epsilon^{-O(\ddim)}$ dependence to essentially linear in $\epsilon^{-1}$.
As a special case, it also improves the previous best deterministic running time in bounded-dimensional Euclidean spaces by a factor of $1/\epsilon$.
Our algorithm is based on a multiscale approximation strategy for well-separated pairs: at small scales a crude approximation suffices, while at larger scales we use increasingly refined net-based searches and charge the resulting error to the optimum MST.

There are several natural directions for future work.
The most immediate question is whether the $\log n$ term in our running time can be removed.
In the current algorithm, this term arises essentially from the size of the well-separated pair structure and from the treatment of low levels, so any further progress would likely require a more economical covering structure or a different way of handling very small scales.
It would already be very interesting to resolve this question even in bounded-dimensional Euclidean spaces.

A second direction is to understand the precise assumptions under which our approach applies.
Our analysis is phrased for metrics of bounded doubling dimension, but it is natural to ask whether the same ideas extend to weaker notions of low-dimensionality or bounded growth.
More broadly, it would be interesting to determine which structural properties of the metric are actually needed for the multiscale charging argument to go through.

\bibliography{biblio.bib}
\bibliographystyle{alphaurl}

\appendix
\label{sec:appendix}

\section{Lemmas about (Approximate) MST's}

\begin{restatable}{proposition}{lemmaxdegree}
	\label{lemma:MST_degree}
	Let $(X, \dist)$ be a metric space of doubling dimension $\ddim$.
Let $T$ be an arbitrary minimum spanning tree of $X$.
Then for every vertex $p \in X$,
\[
\deg_T(p) \le 2^{O(\ddim)} \cdot \log \Delta \ .
\]
\end{restatable}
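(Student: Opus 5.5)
Fix $p \in X$ and let $q_1,\dots,q_d$ be its neighbors in $T$. The plan is to group the neighbors into dyadic distance classes and show that each class has at most $2^{O(\ddim)}$ members. Since the distances from $p$ range over a ratio of at most $\Delta$, there are $O(\log \Delta)$ nonempty classes. Concretely, for each integer $j$ let
\[
Q_j := \{q_i : 2^j < \dist(p,q_i) \le 2^{j+1}\}.
\]
All distances from $p$ lie between the minimum and maximum interpoint distance, so at most $\lceil \log \Delta\rceil + 1$ of the sets $Q_j$ are nonempty. It then suffices to prove $|Q_j| \le 2^{O(\ddim)}$ for every $j$.

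The key structural fact is the standard MST cycle property: for two distinct neighbors $q_a,q_b$ of $p$,
\[
\dist(q_a,q_b) \ge \max\{\dist(p,q_a),\dist(p,q_b)\}.
\]
To see this, suppose $\dist(q_a,q_b) < \dist(p,q_a)$ with $\dist(p,q_a) \ge \dist(p,q_b)$. Removing the edge $(p,q_a)$ from $T$ separates $q_a$ from $p$. Since $q_b$ is still attached to $p$, the points $q_a$ and $q_b$ lie in different components. Adding the edge $(q_a,q_b)$ then yields a spanning tree of strictly smaller weight, contradicting minimality.

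Applying this within $Q_j$, any two points of $Q_j$ are at distance greater than $2^j$, so $Q_j$ is $2^j$-packing. Moreover $Q_j \subseteq B(p,2^{j+1})$, hence $\diam(Q_j) \le 2^{j+2}$. The Packing Property (\Cref{lemma:packing}) now gives $|Q_j| \le 4^{O(\ddim(Q_j))}$. Since a subset of a metric of doubling dimension $\ddim$ has doubling dimension $O(\ddim)$, this bound is $2^{O(\ddim)}$. Summing over the $O(\log\Delta)$ nonempty classes gives $\deg_T(p) \le 2^{O(\ddim)}\log\Delta$.

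The main delicate point is the exchange argument. It needs the strict inequality, or a tie-breaking convention when distances are equal, to apply to an arbitrary MST. With ties, the argument still yields $\dist(q_a,q_b) \ge \max\{\dist(p,q_a),\dist(p,q_b)\}$, because otherwise strict improvement is possible. That non-strict bound already gives that $Q_j$ is $2^j$-packing, so ties cause no trouble. The remaining steps, the dyadic bucketing and the appeal to the packing lemma, are routine. If $\Delta$ is read as the aspect ratio of distances from $p$, as in the introduction, the same bucketing gives the bound with that local $\Delta$ in place of the global one.
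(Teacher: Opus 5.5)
Your proof is correct and follows essentially the same route as the paper. Both use the exchange argument showing $\dist(q_a,q_b)\ge\max\{\dist(p,q_a),\dist(p,q_b)\}$ for neighbors of $p$, bucket the neighbors into geometric distance scales (you fix base $2$, the paper uses a general constant $C\ge 2$), and apply the packing lemma within each of the $O(\log\Delta)$ nonempty scales. Your extra checks, that the exchange actually yields a spanning tree and that a subset has doubling dimension $O(\ddim)$, fill in details the paper leaves implicit.
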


\begin{proof}
Fix a vertex $p$ and let its neighbors in $T$ be
$q_1,\dots,q_k$, ordered so that
\[
r_i := \dist(p,q_i),
\qquad
r_1 \le r_2 \le \dots \le r_k.
\]

We claim the following separation property: For every $i<j$,
\[
\dist(q_i,q_j) \ge r_j.
\]
Indeed, if $\dist(q_i,q_j) < r_j$, then replacing the edge
$p q_j$ (of length $r_j$) by the edge $q_i q_j$
would strictly decrease the total weight of the tree,
contradicting minimality of the MST.

Next, we partition the neighbors of $p$ into geometric scales.
Fix a constant $C \ge 2$.
For each integer $t$, define
\[
S_t := \{\, i : C^t \le r_i < C^{t+1} \,\}.
\]

We will bound the size of each $S_t$.
Let $i<j$ both lie in $S_t$.
Then $r_j \ge C^t$.
By the separation property,
\[
\dist(q_i,q_j) \ge r_j \ge C^t.
\]
Moreover, every $q_i$ with $i \in S_t$ lies in the ball
$B(p, C^{t+1})$.
Thus, the points $\{q_i : i \in S_t\} \subseteq B(p,C^{t+1})$ form a
set that is $C_t$-packing and has diameter at most $2\cdot C^{t+1}$. By Lemma \ref{lemma:packing} we therefore get that the size of $S_t$ is bounded by $|S_t| = |\{q_i : i \in S_t\}| = 2^{O(\ddim)}$.

Finally, note that we can bound the total number of scales as follows:
The smallest radius is $r_1$ and the largest is $r_k$.
The number of indices $t$ for which $S_t$ is nonempty
is at most
\[
O\!\left(
\log_C \frac{r_k}{r_1}
\right)
=
O(\log \Delta).
\]

Summing over all scales, we obtain
\[
\deg_T(p)
=
\sum_t |S_t|
\le
2^{O(\ddim)} \cdot O(\log \Delta),
\]
as claimed.
\end{proof}

\lemmalowerbounddegree*

\begin{proof}
Fix $n \in \mathbb{N}$.
We construct a point set $X \subset \mathbb{R}^{2n}$ as follows.
Let $o = (0,\dots,0)$ be the origin.
For every $i \in \{1,\dots,n\}$ define four points
\[
a_i^+ = 2^i e_{2i-1}, \qquad
a_i^- = -2^i e_{2i-1}, \qquad
b_i^+ = 2^i e_{2i}, \qquad
b_i^- = -2^i e_{2i},
\]
where $e_j$ denotes the $j$-th standard basis vector.
Let
\[
X = \{o\} \cup \{a_i^+,a_i^-,b_i^+,b_i^- : i=1,\dots,n\}.
\]
Thus $|X| = 4n+1$.

First, we claim that the doubling dimension is bounded, i.e., $\ddim(X)=O(1)$.
Observe that all points in $\{a_i^\pm,b_i^\pm\}$ have distance $2^i$
from the origin, so the points occur at exponentially separated
scales.
Consider the ball $B(o,2^k)$.
It contains the origin and all points $\{a_i^\pm,b_i^\pm\}$ with
$i \le k$.
The four points $\{a_k^+,a_k^-,b_k^+,b_k^-\}$ lie on the sphere of
radius $2^k$ and have pairwise distance at least $2^k\sqrt{2}$.
Hence any ball of radius $2^{k-1}$ can contain at most one of them,
so at least four such balls are required to cover them.
All remaining points lie inside $B(o,2^{k-1})$ and can therefore be
covered by a constant number of balls of radius $2^{k-1}$.
Consequently $B(o,2^k)$ can be covered by a constant number of balls
of radius $2^{k-1}$.
A similar argument applies to balls centered at other points of $X$,
so the doubling constant is bounded by a universal constant.
Therefore $\ddim(X)=O(1)$.

Next, we show that the degree of the MST is unbounded.
For every $i$ we have
\[
\dist(o,a_i^\pm) = 2^i,
\qquad
\dist(o,b_i^\pm) = 2^i.
\]
Moreover,
\[
\dist(a_i^+,a_i^-) = 2^{i+1},
\qquad
\dist(a_i^\pm,b_i^\pm) = 2^i\sqrt{2},
\]
and distances between points corresponding to different indices
$i<j$ are at least $2^j$.

Thus for every point $p \in X \setminus \{o\}$ the closest point in
$X$ is the origin.
By the cut property of minimum spanning trees, the edge $(o,p)$ must
belong to every minimum spanning tree.
Therefore the minimum spanning tree of $X$ is the star centered at
$o$.

Since $|X|-1 = 4n$, the degree of $o$ in the minimum spanning tree is
$4n$, while all other vertices have degree $1$.
Hence the maximum degree of the minimum spanning tree is $4n =
\Omega(n)$.
Since $n$ was arbitrary while $\ddim(X)=O(1)$, the maximum degree of a
minimum spanning tree in metrics of bounded doubling dimension is
unbounded.
\end{proof}

\lemmaapxupperbounddegree*

\begin{proof}
    Let $T^*$ be the MST of $X$.
    By \Cref{lemma:MST_degree}, the maximum degree of $T^*$ is $2^{O(\ddim)} \log \Delta$.
    Fix a threshold $\tau = 2^{O(\ddim)} \log(1/\epsilon)$.
    Say a point $x \in X$ is \emph{heavy} w.r.t. $T^*$, if $\deg_{T^*}(x) \geq \tau$, and \emph{light} otherwise.
    Denote by $H_{T^*} \subseteq X$ the set of heavy points w.r.t. $T^*$.
    The overall plan is to start with $T^*$, and 
    modify the edge connection around each heavy point $x \in H_{T^*}$ to make it light.
    The final spanning tree will be the desired $T$.

    We treat $T^*$ as a rooted tree.
    Initially, all points in $H_{T^*}$ is marked as active.
    At each step, we find the lowest level active (heavy) point, denoted by $x \in H_{T^*}$.
    Denote $\wdeg_{T^*}(x) := \sum_{(x, y) \in E(T^*)} \dist(x, y)$ as the weighted degree of $x$.
    For $1 \leq \ell \leq \log( \wdeg_{T^*}(x))$, denote the level $\ell$ neighbor of $x$ by $N_{T^*}^\ell(x) := \{y \in B(x, 2^\ell) \setminus B(x, 2^{\ell - 1}) \colon (x, y) \in E(T^*)\}$.
    By \Cref{lemma:MST_degree}, we have $|N_{T^*}^\ell(x)| \leq 2^{O(\ddim)}$.
    Let $p(x)$ be the parent of $x$ on $T^*$.
    We do the following modification to the edges incident to $x$:
    \begin{enumerate}[(1)]
        \item For every $1\leq \ell \leq \log(\epsilon \wdeg_{T^*}(x))$, remove all edges $\{(x, y) \colon y \in N_{T^*}^\ell(x) \setminus \{p(x)\}\}$ from the current tree, 
        pick an arbitrary $c_\ell \in N_{T^*}^\ell(x) \setminus \{p(x)\}$, 
        and add edges $\{(c_\ell, y) \colon y \in N_{T^*}^\ell(x) \setminus \{p(x)\}\}$.
        Add a path $(x, c_1, c_2, \dots, c_{\log(\epsilon \wdeg_{T^*}(x))})$.\footnote{If $p(x)$ is the only point in $N_{T^*}^\ell(x)$, then $c_\ell$ does not exist.
        In this case we simply skip level $\ell$.}

        \item For $\ell > \log(\epsilon \wdeg_{T^*}(x))$, keep the edges $\{(x, y) \colon y \in N_{T^*}^\ell(x)\}$ untouched.
        \item Keep the edge $(x, p(x))$ untouched.
        \item Mark $x$ as inactive.
    \end{enumerate}
    Roughly speaking, we remove all ``low level'' edges incident to $x$, and add an arbitrary star at each level (whose center is guaranteed not to be $p(x)$).
    To make everything connected, we then add a path through the star centers.
    Crucially, we never touch $p(x)$ and thus all modifications are made between $x$ and its children on $T^*$.
    This ensures that the degree of $p(x)$ stays unchanged during the modification around $x$.

    After all heavy points in $H_T^*$ are marked as inactive, we terminate the procedure and denote the resulting tree by $T$.
    Before we show the degree and weight bounds for $T$, let us first check that $T$ is indeed a spanning tree of $X$.
    This follows immediately as our algorithm maintains connectivity, and does not change the total number of edges.

    We now show that the maximum degree of $T$ is at most $2^{O(\ddim)} \log(1/\epsilon)$.
    We first state the following observation, which follows directly from our procedure:
    \begin{observation}\label{obs:degree}
        For every $x \in H_{T^*}$, $\deg_T(x) = \deg_{T^*}(x)$ holds before the step when $x$ becomes inactive.
        For every $x \notin H_{T^*}$, $\deg_T(x) = \deg_{T^*}(x)$ holds before the step when $p(x)$ becomes inactive.
    \end{observation}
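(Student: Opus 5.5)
This follows directly from the description of the procedure, by tracking which edges incident to a given vertex are ever added or removed. The only steps that change edges are the modification steps around an active heavy point $x$. The edges removed are of the form $(x,y)$ with $y$ a child of $x$ in $T^*$. The edges added are of the form $(c_\ell, y)$, $(x, c_1)$ or $(c_\ell, c_{\ell+1})$, where every $c_\ell$ and every such $y$ is a child of $x$ in $T^*$ (we explicitly exclude $p(x)$). Consequently, the step for $x$ changes only the degrees of $x$ and of children of $x$ in $T^*$. In particular, it never changes the degree of $p(x)$ or of any vertex that is neither $x$ nor a child of $x$.

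For the first claim, fix $x \in H_{T^*}$ and consider any step performed before $x$ becomes inactive; it is the step for some other heavy point $z \neq x$. This step can change $\deg(x)$ only if $x$ is a child of $z$, that is, $z = p(x)$. Since $x$ is active when $z$ is processed and we always process the lowest active heavy point, $z$ has level at most that of $x$. Here I interpret ``lowest level'' as ``deepest in the rooted tree'', which is the reading under which the procedure is consistent. The parent $p(x)$ is strictly higher than $x$, so it cannot be chosen while $x$ is still active. Hence no step before $x$'s own step touches $x$'s degree, and $\deg_T(x) = \deg_{T^*}(x)$ at that time.

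For the second claim, fix a light point $x \notin H_{T^*}$. Such an $x$ is never processed itself, so the only step that can change its degree is the step for $p(x)$, when $p(x)$ is heavy. Every other step $z$ has $x \neq z$ and $x$ not a child of $z$. Thus before $p(x)$ becomes inactive, $\deg_T(x) = \deg_{T^*}(x)$.

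The one point needing care is the processing order. The argument requires that a parent is never processed before its heavy child, and I would state explicitly that the lowest active point means the one of greatest depth. I would also check that the edges $(c_\ell, y)$ and the path edges are genuinely new, so that they do not interact with the bookkeeping of other vertices. This holds because they lie inside the child set of $x$.
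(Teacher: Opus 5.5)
Your proof is correct and matches the paper, which offers no separate argument beyond saying the observation follows from the procedure. The paper relies on its remark that the step for $x$ never touches $p(x)$ and only modifies edges among $x$ and its $T^*$-children, which is exactly your invariant. Your point that ``lowest level'' must mean deepest-first, so that $p(x)$ is never processed while a heavy child $x$ is still active, is left implicit in the paper but is genuinely needed for the first claim.
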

    We prove $\deg_T(x) \leq 2^{O(\ddim)} \log(1/\epsilon)$ for every $x \in X$.
    If $x \notin H_{T^*}$, then by \Cref{obs:degree}, $\deg_T(x)$ changes only at the step when $p(x)$ becomes inactive.
    By our procedure, $\deg_T(x)$ increases only if $x$ is the center of the star at a certain level $\ell$.
    In this case, $x$ is incident to $|N_{T^*}^\ell(p(x))| - 1$ edges of the star and $2$ edges of the path.
    Thus \[
    \deg_T(x) \leq \deg_{T^*}(x) + (|N_{T^*}^\ell(p(x))| + 1)
    \leq \tau + 2^{O(\ddim)} + 1 = \log (\frac{1}{\epsilon})2^{O(\ddim)}.
    \]
    If $x \in H_{T^*}$, then by \Cref{obs:degree}, $\deg_T(x)$ changes at the steps when $x$ or $p(x)$ becomes inactive.
    At the step when $x$ becomes inactive, edges are modified in such a way that $x$ is only adjacent to 
    (i) all point in $N_{T^*}^\ell(x)$ for $\ell > \log(\epsilon \wdeg_{T^*}(x))$,
    (ii) its parent $p(x)$ and 
    (iii) the center $c_1$ of the first level star.
    Hence for this step, 
    \begin{align*}
        \deg_T(x) \leq \sum_{\ell = \log(\epsilon \wdeg_{T^*}(x))}^{\log(\wdeg_{T^*}(x))} |N_{T^*}^\ell(x)|
        + 2
        \leq 2^{O(\ddim)} \log\frac{1}{\epsilon} + 2.
    \end{align*}
    At the stage when $p(x)$ becomes inactive, $\deg_T(x)$ increases by at most $2^{O(\ddim)} + 1$ (if it is selected as a star center $c_\ell$).
    Therefore, the final degree of $x$ in $T$ is bounded by 
    \begin{align*}
        \deg_T(x) \leq 2^{O(\ddim)} \log\frac{1}{\epsilon} + 2 + 2^{O(\ddim)} + 1
        \leq  2^{O(\ddim)} \log\frac{1}{\epsilon} + 3.
    \end{align*}
    We thus conclude that the maximum degree of $T$ is 
    $2^{O(\ddim)} \log(1/\epsilon)$.

    We next prove that $T$ is a $(1 + 2^{O(\ddim)} \epsilon)$-approximate MST of $X$.
    For this purpose, let us bound the increase of edge weight at each step.
    Assume the current step marks a point $x \in H_{T^*}$ as inactive.
    Recall that the newly added edges are 
    (i) edges in the star $\{(c_\ell, y) \colon y \in N_{T^*}^\ell(x) \setminus \{p(x)\}\}$ for $\ell \leq \log(\epsilon \wdeg_{T^*}(x))$ and 
    (ii) edges in the path $(x, c_1, c_2, \dots, c_{\log(\epsilon \wdeg_{T^*}(x))})$.
    The total length of the newly added edges is
    \begin{align*}
        &\qquad \sum_{\ell = 1}^{\log(\epsilon \wdeg_{T^*}(x))}
        \sum_{y \in N_{T^*}^\ell(x) \setminus \{p(x)\}}
        \dist(y, c_\ell)
        + \sum_{\ell = 0}^{\log(\epsilon \wdeg_{T^*}(x))}
        \dist(c_\ell, c_{\ell + 1}) \\
        & \leq \sum_{\ell = 1}^{\log(\epsilon \wdeg_{T^*}(x))}
        2^{O(\ddim)} \cdot 2 \cdot 2^\ell
        + \sum_{\ell = 0}^{\log(\epsilon \wdeg_{T^*}(x))}
        2 \cdot 2^{\ell + 1} \\
        & \leq 2^{O(\ddim)} \epsilon \wdeg_{T^*}(x).
    \end{align*}
    Summing the increased length up over all steps, we have
    \begin{align*}
        w(T) - w(T^*)
        \leq \sum_{x \in H_{T^*}} 2^{O(\ddim)} \epsilon \wdeg_{T^*}(x)
        \leq 2^{O(\ddim)} \epsilon \sum_{x \in X} \wdeg_{T^*}(x)
        \leq 2^{O(\ddim)} \epsilon \cdot w(T^*).
    \end{align*}
    Therefore, $w(T) \leq (1 + 2^{O(\ddim)} \epsilon) w(T^*) = (1 + 2^{O(\ddim)} \epsilon) \MST(X)$.
    Rescaling $\epsilon$ concludes the proof.
\end{proof}

\section{Omitted Proofs from \Cref{sec:preliminaries}}

\subsection{Proofs about the Net Hierarchy}
\label{sec:appendix:net-hierarchy}

\lemmadescpackingcovering*

\begin{proof}
    We first show that $\desc_k(u) \subseteq B(u, 2^{\ell + 2})$.
    To this end, we inductively prove that for every $k < \ell$, 
    \begin{align}
        \desc_k(u) \subseteq B\left(u, \sum_{j = k + 2}^{\ell + 1} 2^j\right)
        \subseteq B(u, 2^{\ell + 2}).
        \label{eqn:descendant_is_subset}
    \end{align}
    For $k = \ell - 1$, we have $\desc_{\ell - 1}(u) = \child(u) \subseteq B(u, 2^{\ell + 1})$ by \Cref{def:children,def:descendants}.
    Now assume \eqref{eqn:descendant_is_subset} holds for $k < \ell$.
    For every $v \in \desc_{k - 1}(u)$, by \Cref{def:descendants}, there exists $w \in \desc_k(u)$, such that $v$ is a child of $w$.
    Hence, \begin{align*}
        \dist(v, u) \leq \dist(v, w) + \dist(w, u)
        \leq 2^{k + 1} + \sum_{j = k + 2}^{\ell - 1} 2^j 
        = \sum_{j = k + 1}^{\ell - 1} 2^j.
    \end{align*}
    Therefore, $\desc_{k - 1}(u) \subseteq B(u, \sum_{j = k + 1}^{\ell - 1} 2^j)$, concluding the induction.
    Finally, since $N_k$ is $2^{k - 3}$-packing (\eqref{eqn:N_l_properties} and \Cref{lemma:net_hierarchy_construction}), $\desc_k(u)$ is also $2^{k - 3}$-packing for $B(u, 2^{\ell + 2})$. 

    For \eqref{it:desc_covering}, since $N_k$ is $2^{k - 2}$-covering for $X$, there exists $v \in N_k$, such that $\dist(x, v) \leq 2^{k - 2}$.
    It remains to prove that $v \in \desc_k(u)$:
    Denote $v_k := v$;
    for $k + 1 \leq j \leq \ell - 1$, denote $v_j \in N_j$ as the level-$j$ net point closest to $v_{j - 1}$.
    Then $\dist(v_j, v_{j - 1}) \leq 2^{j - 2}$, and thus $v_{j - 1} \in \child(v_j)$.
    Note that 
    \begin{align*}
        \dist(v_{\ell - 1}, u)
        \leq \dist(u, x) + \dist(x, v) + \sum_{j = k + 1}^{\ell - 1} \dist(v_{j - 1}, v_j)
        \leq 2^{\ell} + 2^{k - 2} + \sum_{j = k + 1}^{\ell - 1} 2^{j -2}
        \leq 2^\ell + 2^{\ell - 2}
        < 2^{\ell + 1}.
    \end{align*}
    Therefore, $v_{\ell - 1} \in \child(u)$, and consequently, $v \in \desc_k(u)$.
\end{proof}

\lemmanumchildrenparent*

\begin{proof}
    By \eqref{it:desc_packing} of \Cref{lemma:descendants_packing_covering}, $\child(u)$ is $2^{\ell - 4}$-packing for $B(u, 2^{\ell + 2})$.
    Hence by \Cref{lemma:packing}, $|\child(u)| \leq 2^{O(\ddim)}$.

    Assume $u$ is a child of $u'$.
    Then by \Cref{def:children}, $\dist(u, u') \leq 2^{\lev(u') + 1} = 2^{\ell + 2}$.
    Since $u' \in N_{\ell + 1}$, which is $2^{\ell - 2}$-packing, the number of such $u'$, by \Cref{lemma:packing}, is at most $2^{O(\ddim)}$.
\end{proof}

\lemmanethierarchyconstruction*

\begin{proof}
    For the purpose of construction, each net point $u \in N_\ell$ maintains, apart from its children $\child(u)$, two auxiliary sets:
    (i) the \emph{group} $G(u) := \{x \in X \colon u = \argmin_{u' \in N_\ell} \dist(x, u')\}$ (i.e., $G(u)$ contains all points $x \in X$ such that $u$ is the point in $N_\ell$ closest to $x$, breaking tie arbitrarily), and 
    (ii) the \emph{neighbors} $H(u) := B(u, 2^{\ell + 2}) \cap N_\ell$, i.e., all net points in $N_\ell$ within distance $2^{\ell + 2}$ from $u$.

    The data structure is constructed inductively in a top-down manner. 
    At level $L$, $N_L$ contains one arbitrary point $u \in X$.
    Its group and neighbors, by definition, are $G(u) = X$ and $H(u) = \{u\}$, respectively.
    Assume $N_{\ell + 1}$ together with $\{G(u)\}_{u \in N_{\ell + 1}}$ and $\{H(u)\}_{u \in N_{\ell + 1}}$ is already constructed, we construct 
    $N_\ell$, $\{\child(u)\}_{u \in N_{\ell + 1}}$, 
    $\{G(v)\}_{v \in N_{\ell}}$ and $\{H(v)\}_{v \in N_{\ell}}$ by the following steps.

    \paragraph{Step 1: Construct nets within each group.}
    For each $u \in N_{\ell + 1}$, construct a $2^{\ell - 3}$-net $N_\ell'(u)$ of $G(u)$, such that $u \in N_\ell'(u)$.
    Denote $N_\ell'$ as the union of all these nets, i.e., 
    \[N_\ell' := \bigcup_{u \in N_{\ell + 1}} N_\ell'(u).
    \]

    In particular, each $N_\ell'(u) \subseteq G(u)$ can be constructed greedily in time $|N_\ell'(u)| \cdot |G(u)|$.
    Hence, the total complexity of constructing $N_\ell'$ is 
    \begin{align*}
        \sum_{u \in N_{\ell + 1}} |N_\ell'(u)| \cdot |G(u)|
        \leq \sum_{u \in N_{\ell + 1}} 2^{O(\ddim)} \cdot |G(u)|
        = 2^{O(\ddim)} n.
    \end{align*}
    In the first inequality, we use that $|N_\ell'(u)| \leq 2^{O(\ddim)}$, this is because $N_\ell'(u)$ is $2^{\ell - 3}$-packing for $G(u)$ and $\diam(G(u)) \leq 2^{\ell - 1} + 2^{\ell - 1} = 2^\ell$.

    \paragraph{Step 2: Construct $N_\ell$ by eliminating non-packing points in $N_\ell'$.}
    $N_\ell'$ is already $2^{\ell - 3}$-covering for $X$ by construction, but may not be $2^{\ell - 3}$-packing, as two net points from different groups can be very close to each other.
    More concretely, if $v, v' \in N_\ell'$ satisfies $\dist(v, v') < 2^{\ell - 3}$, and assume $v \in N_\ell'(u)$ and $v' \in N_\ell'(u')$, we then have 
    \begin{align*}
        \dist(u, u') \leq \dist(v, v') + \dist(v, u) + \dist(v', u')
        \leq 2^{\ell - 3} + 2^{\ell - 1} + 2^{\ell - 1}
        < 2^{\ell + 3}.
    \end{align*}
    Equivalently, $u' \in H(u)$ and $v' \in \bigcup_{u' \in H(u)} N_\ell'(u')$.
    We thus have the following procedure to eliminate the close points in $N_\ell'$:

    Let $\sigma$ be an arbitrary ordering over $N_\ell'$, such that for $u \in N_{\ell + 1} \subseteq N_\ell'$, $\sigma(u) \leq |N_{\ell + 1}|$.
    Initially, let $N_\ell = N_\ell'$.
    Each time we pick one unexplored point $v \in N_\ell$ with the smallest order $\sigma(v)$, and find $u \in N_{\ell + 1}$ such that $v \in N_\ell'(u)$.
    Then, for each point $v' \in \bigcup_{u' \in H(u)} N_\ell'(u')$, 
    if $\dist(v, v') < 2^{\ell - 3}$ and $\sigma(v) < \sigma(v')$, delete $v'$ from $N_\ell$.
    The remaining set after all points are explored is defined as the level-$\ell$ net $N_\ell$.

    Note that $|H(u)| \leq 2^{O(\ddim)}$ and $|N_\ell'(u')| \leq 2^{O(\ddim)}$ for each $u' \in H(u)$, hence the time complexity of exploring each $v \in N_\ell'$ is $2^{O(\ddim)}$.
    Therefore, the time complexity of the above elimination procedure is $2^{O(\ddim)} |N_\ell'| \leq 2^{O(\ddim)} n$.

    We claim that $N_\ell$ is $2^{\ell - 3}$-packing and $2^{\ell - 2}$-covering for $X$.
    Indeed, the packing property follows directly from the way $N_\ell$ is constructed.
    Regarding the covering property, let us fix a point $x \in X$ and prove that $\dist(x, N_\ell) \leq 2^{\ell - 2}$. 
    Let $v \in N_\ell'$ be the closest point to $x$ in $N_\ell'$.
    Recall that $N_\ell'$ is $2^{\ell - 3}$-packing, so $\dist(x, v) \leq 2^{\ell - 3}$.
    If $v \in N_{\ell}$, then we already have $\dist(x, N_\ell) \leq \dist(x, v) \leq 2^{\ell - 3}$.
    If $v \notin N_\ell$, then it must be deleted during the aforementioned procedure.
    Assume $v$ is deleted while the procedure is exploring $v'$.
    We then have $\dist(v, v') < 2^{\ell- 3}$, and $v'$ must be contained in $N_\ell$.
    Therefore, $\dist(x, N_\ell) \leq \dist(x, v') \leq \dist(x, v) + \dist(v, v') \leq 2^{\ell - 2}$.
    In conclusion, $N_\ell$ is $2^{\ell - 3}$-packing and $2^{\ell - 2}$-covering for $X$.

    \paragraph{Step 3: Maintain the children $\child(u)$ for each $u \in N_{\ell + 1}$.}
    Recall that for each $u \in N_{\ell + 1}$, its children are $\child(u) = \{v \in N_\ell \colon \dist(u, v) \leq 2^{\ell + 2}\}$.
    Hence, it suffices to find for each $v \in N_\ell$ all $u' \in N_{\ell + 1}$ with $\dist(v, u') \leq 2^{\ell + 2}$, and add $v$ to $\child(u')$.
    Let $u \in N_{\ell + 1}$ be such that $v \in N_\ell'(u) \subseteq G(u)$, we then have 
    \begin{align*}
        \dist(u, u') \leq \dist(u, v) + \dist(u', v)
        \leq 2^{\ell - 1} + 2^{\ell + 2}
        < 2^{\ell + 3},
    \end{align*}
    implying $u' \in H(u)$.
    We thus have the following procedure:

    For each $v \in N_\ell$, find $u \in N_{\ell + 1}$ such that $v \in G(u)$.
    Enumerate all $u' \in H(u)$; if $\dist(u', v) \leq 2^{\ell + 2}$, then add $v$ to $\child(u')$.

    Since $|H(u)| \leq 2^{O(\ddim)}$, the total time complexity is $2^{O(\ddim)} |N_\ell| \leq 2^{O(\ddim)} n$.

    \paragraph{Step 4: Maintain the grouping $\{G(v)\}_{v \in N_\ell}$.}
    It suffices to find for every $x \in X$ its closest point $v$ in $N_\ell$, and add $x$ to $G(v)$.
    To this end, let $u \in N_{\ell + 1}$ be such that $x \in G(u)$.
    Then \begin{align*}
        \dist(u, v) \leq \dist(x, u) + \dist(x, v)
        \leq 2^{\ell - 1} + 2^{\ell - 2} 
        < 2^{\ell + 2}.
    \end{align*}
    This implies that $v \in \child(u)$.
    We thus have the following procedure:

    For each $x \in X$, find $u \in N_{\ell + 1}$ such that $x \in G(u)$.
    Find the point in $\child(u)$ closest to $x$, denoted by $v$.
    Add $x$ to $G(v)$.

    By \Cref{lemma:num_children_parent}, $|\child(u)| \leq 2^{O(\ddim)}$, so the time complexity is $2^{O(\ddim)} n$.

    \paragraph{Step 5: Maintain the neighbors $\{H(v)\}_{v \in N_\ell}$.}
    Recall that $H(v) = \{v' \in N_\ell \colon \dist(v, v') \leq 2^{\ell + 2}\}$ for each $v \in N_\ell$.
    Consider $v \in N_\ell$ and $v' \in H(v)$, and denote $u, u' \in N_{\ell + 1}$ to be such that $v \in G(u)$ and $v' \in G(u')$, respectively.
    Then \begin{align*}
        \dist(u, u') \leq \dist(v, v') + \dist(u, v) + \dist(u', v') 
        \leq 2^{\ell + 2} + 2^{\ell - 1} + 2^{\ell - 1}
        < 2^{\ell + 3}.
    \end{align*}
    Hence, $u' \in H(u)$ and $v' \in \bigcup_{u' \in H(u)} N_\ell'(u')$.
    We thus have the following procedure:

    For each $v \in N_\ell$, find $u \in N_{\ell + 1}$ such that $v \in G(u)$.
    Enumerate $v' \in \bigcup_{u' \in H(u)} N_\ell'(u')$; if $\dist(v, v') \leq 2^{\ell + 2}$, then add $v'$ to $H(v)$.

    Since $|H(u)| \leq 2^{O(\ddim)}$ and $|N_\ell'(u')| \leq 2^{O(\ddim)}$ for each $u' \in H(u)$, the time complexity is $2^{O(\ddim)} |N_\ell|
    \leq 2^{O(\ddim)} n$.

    \paragraph{Wrapping up.}
    The time complexity of constructing each level (including $N_\ell$, $\{\child(u)\}_{u \in N_{\ell + 1}}$, 
    $\{G(v)\}_{v \in N_{\ell}}$ and $\{H(v)\}_{v \in N_{\ell}}$) is $2^{O(\ddim)} n$.
    Since we have $\log \Delta$ levels, the total time complexity of constructing the data structure is 
    $2^{O(\ddim)} n \log \Delta$.
\end{proof}

\subsection{Proofs about the WSPC}
\label{sec:WSPC}

\lemmaWSPCMST*

\begin{proof}
    For (1), assume by contradiction that there exists a pair $(U, V) \in \Psi$ and two different edges $(u, v), (u', v') \in U \times V$ in the MST $T^*$.
    One can then replace $(u', v')$ by two edges $(u, u')$ and $(v, v')$.
    Such replacement does not change the connectivity, and 
    \begin{align*}
        \dist(u, u') + \dist(v, v') < \frac{1}{t} \dist(U, V) + \frac{1}{t} \dist(U, V) 
        \leq \frac{2}{t} \dist(u', v')
        \leq \dist(u', v'),
    \end{align*}
    contradicting the minimality of $T^*$.

    For (2), it suffices to show that the resulting graph is connected.
    Denote the MST edges by $\{e_1^*, e_2^*, \dots, e_{n - 1}^*\}$, where $w(e_1^*) \leq w(e_2^*) \leq \dots \leq w(e_{n - 1}^*)$,
    and suppose $(U_i, V_i) \in \Psi$ is the selected pair with $e_i^* \in U_i \times V_i$.
    Let $e_i$ be an arbitrary edge in $U_i \times V_i$.
    For $1 \leq i \leq n - 1$, let $G_i^*$ (resp. $G_i$) be the graph formed by $\{e_1^*, e_2^*, \dots, e_{i}^*\}$ (resp. $\{e_1, e_2, \dots, e_{i}\}$).
    Let 
    $\mathcal{C}_i^* \subseteq 2^X$ (resp. $\mathcal{C}_i \subseteq 2^X$) be the collection of connected components in $G_i^*$ (resp. $G_i$).
    We prove by induction that $\mathcal{C}_i^* = \mathcal{C}_i$.
    (2) then follows by setting $i = n - 1$.

    The equation holds for $i = 0$, as initially both $\mathcal{C}_0^*$ and $\mathcal{C}_0$ are collections of singletons.
    Assume $\mathcal{C}_{i - 1}^* = \mathcal{C}_{i - 1}$.
    We first claim that $U_i$ is connected in $G_{i - 1}^*$ (and thus in $G_{i - 1}$).
    In fact, if there are two points $u_1, u_2 \in U_i$ which are not connected in $G_{i - 1}^*$, then since $\dist(u_1, u_2) \leq \diam(U_i) < \dist(U, V) \leq w(e_i^*)$, replacing $e_i^*$ by $(u_1, u_2)$ yields a cheaper solution, which contradicts the minimality of $T^*$.
    Similarly, $V_i$ is also connected in $G_{i - 1}^*$ (and thus in $G_{i - 1}$).

    Therefore, $e_i^*$ merges the connected components in $G_{i - 1}^*$ which contain $U_i$ and $V_i$, and $e_i$ merges the connected components in $G_{i - 1}$ which contain $U_i$ and $V_i$.
    Since $\mathcal{C}_{i - 1}^* = \mathcal{C}_{i - 1}$, we have $\mathcal{C}_i^* = \mathcal{C}_i$, concluding the proof.
\end{proof}

\lemmaWSPCdoubling*

\begin{proof}
The algorithm is given in \Cref{alg:WSPC}.

\begin{algorithm}[!ht]
\caption{$\WSPC(u, v)$}
\DontPrintSemicolon
\label{alg:WSPC}
assume $\lev(u) \geq \lev(v)$ (exchange $u$ and $v$ otherwise) \;
\If{$\dist(u, v) > (2t + 2) 2^{\lev(u)}$}{
    \Return $\{(u, v)\}$ \;
}
\ElseIf{$\lev(u) = \lev(v) = 0$ and $u = v$}{
    \Return $\emptyset$ \;
}
\ElseIf{$\lev(u) = \lev(v) = 0$ and $u \neq v$}{
    \Return $\{(u, v)\}$ \;
}
\Else{
    \Return $\bigcup_{u' \in \child(u)} \WSPC(u', v)$
}
\end{algorithm}

To construct $\Psi$, we call \Cref{alg:WSPC} with $\WSPC(r, r)$ and parameter $t$, where $r$ is the ``root'' of the net hierarchy (i.e., the only net point in $N_L$).
Properties \eqref{it:bounded_lev_diff} and \eqref{it:separation} follow directly from \Cref{alg:WSPC}.

    For property \eqref{it:Psi_covering}, let us fix a pair of points $(x, y) \in \binom{X}{2}$, and show that there exists $(u, v) \in \Psi$ such that $(x, y) \in \wspcball{u} \times \wspcball{v}$.
    To this end, assume \Cref{alg:WSPC} makes a call to $\WSPC(u, v)$, where $x \in B(u, 2^{\lev(u)})$ and $y \in B(v, 2^{\lev(v)})$;
    we show that the call $\WSPC(u, v)$ either returns $(u, v)$, or makes a call to some $\WSPC(u', v)$, such that $x \in B(u', 2^{\lev(u')})$ and $y \in B(v, 2^{\lev(v)})$.
    Indeed, it suffices to prove 
    \begin{align*}
        B(u, 2^{\lev(u)}) \subseteq \bigcup_{u' \in \child(u)} B(u', 2^{\lev(u')}).
    \end{align*}
    This follows directly from \eqref{it:desc_covering} of \Cref{lemma:descendants_packing_covering}.

    For property \eqref{it:bounded_deg}, we will show a slightly stronger result: 
    For each net point $u$, the number of net points $v$ such that 
    \Cref{alg:WSPC} makes a call to $\WSPC(u, v)$ is at most $t^{O(\ddim)}$.

    If $u = r$ then the claim holds, as \Cref{alg:WSPC} calls $(r, v)$ only if $v = r$ or $v$ is a child of $r$.
    By \Cref{lemma:num_children_parent}, $|\child(r)| \leq 2^{O(\ddim)}$.
    Assume $u \neq r$, and the procedure calls $(u, v)$ for some net point $v$.
    We consider the last call that invokes $(u, v)$.
    We have two cases:

    If the last call is $(u', v)$, where $u$ is the child of $u'$,
    then by property \eqref{it:bounded_lev_diff} we have $\lev(v) \leq \lev(u) + 1$ and $\lev(v) \geq \lev(u') - 1 = \lev(u)$.
    Moreover, since the call to $(u', v)$ does not return $\{(u, v)\}$, we have
    $\dist(u', v) \leq (2t + 2) \max\{2^{\lev(u')}, 2^{\lev(v)}\} = (2 t + 2) 2^{\lev(u')}$.
    By triangle inequality, 
    \begin{align*}
        \dist(u, v) \leq \dist(u, u') + \dist(u', v)
        \leq 2^{\lev(u') + 1} + (2 t + 2) 2^{\lev(u')}
        = (2t + 4) 2^{\lev(u')}
        = (4t + 8) 2^{\lev(u)}.
    \end{align*}
    As $\lev(u) \leq \lev(v) \leq \lev(u) + 1$, all possible such $v$ are in 
    \[
    B(u, (4t + 8) 2^{\lev(u)}) \cap (N_{\lev(u)} \cup N_{\lev(u) + 1}).
    \]
    The number of such $v$ is at most $(64 t + 128)^{O(\ddim)}$.

    If the last call is $(u, v')$, where $v$ is a child of $v'$, then 
    by property \eqref{it:bounded_lev_diff} we have $\lev(v') \leq \lev(u) + 1$ and $\lev(v') = \lev(v) + 1 \geq \lev(u)$.
    Since the call to $(u, v')$ does not return $\{(u, v')\}$, we have
    $\dist(u, v') \leq (2t + 2) \max\{2^{\lev(u)}, 2^{\lev(v')}\} \leq (4t + 4) 2^{\lev(u)}$.
    So all such $v'$ are in 
    \[
    B(u, (4t + 4) 2^{\lev(u)}) \cap (N_{\lev(u)} \cup N_{\lev(u) + 1}).
    \]
    The number of such $v'$ is at most $(64 t + 64)^{O(\ddim)}$.
    As $v$ is one child of $v'$, by \Cref{lemma:num_children_parent}, the number of $v$ is at most $(64 t + 64)^{O(\ddim)} \cdot 2^{O(\ddim)}$.

    In conclusion, the number of net points $v$ such that 
    the procedure makes a call to $(u, v)$ is at most $t^{O(\ddim)}$.
    Property \eqref{it:bounded_deg} thus follows, and the time complexity of constructing $\Psi$ also follows.

    To prove the ``In particular'' part, let us check that $\Phi$ satisfies \Cref{def:WSPC}.
    The covering property (\Cref{def:WSPC} \eqref{it:WSPC_covering}) follows directly from property \eqref{it:Psi_covering}.
    Regarding the separation (\Cref{def:WSPC} \eqref{it:WSPC_separation}),
    note that for a pair $(u, v) \in \Psi$, if $\lev(u) = \lev(v) = 0$, then both $\wspcball{u}$ and $\wspcball{v}$ are singletons, thus have diameter $0$.
    Otherwise, by property \eqref{it:separation},
    $\dist(u, v) > (2t + 2) \max\{2^{\lev(u)}, 2^{\lev(v)}\}$.
    Hence,
    \begin{align*}
        \dist(\wspcball{u}, \wspcball{v}) \geq \dist(u, v) - 2^{\lev(u)} - 2^{\lev(v)}
        > 2t \cdot \max\{2^{\lev(u)}, 2^{\lev(v)}\}.
    \end{align*}
    So $\Phi$ is a $t$-WSPC of $X$.

    To construct $\Phi$ from $\Psi$, it suffices to construct $\wspcball{u}$ for every net point $u$ on the net hierarchy.
    For each $x \in X$ and $0 \leq \ell \leq L$, denote $P_\ell(x) := \{u \in N_\ell \colon \dist(u, x) \leq 2^\ell\}$.
    The problem then reduces to computing $P_\ell(x)$ for all $x \in X$ and $0 \leq \ell \leq L$.
    We can construct $P_\ell(x)$ inductively: The base case is $P_0(x) = \{x\}$.
    For $\ell \geq 1$, assume we already have $P_{\ell - 1}(x)$.
    Note that for every $u \in P_\ell(x)$ and $v \in P_{\ell - 1}(x)$, $\dist(u, v) \leq \dist(u, x) + \dist(v, x) \leq 2^\ell + 2^{\ell - 1} < 2^{\ell + 1}$, so $v \in \child(u)$.
    Hence to construct $P_\ell(x)$, we pick an arbitrary point $v \in P_{\ell - 1}(x)$, construct the set $A = \{u \in N_\ell \colon v \in \child(u)\}$ using pointers to $v$, and compute $P_\ell(x) := A \cap B(x, 2^\ell)$.
    Since $|A| \leq 2^{O(\ddim)}$ by \Cref{lemma:num_children_parent}, the time complexity of constructing $P_\ell(x)$ is $2^{O(\ddim)}$, thus $P_\ell(x)$ for all $0 \leq \ell \leq L$ and $x \in X$ can be constructed in $2^{O(\ddim)} n \log \Delta$ time.
    Therefore, $\Phi$ can be constructed from $\Psi$ in time $2^{O(\ddim)} n \log \Delta + t^{O(\ddim)} n \log \Delta = t^{O(\ddim)} n \log \Delta$.
\end{proof}

\subsection{Approximating the Doubling Dimension}

\begin{restatable}{lemma}{lemmaddimapx}\label{lemma:ddim_apx}
    There is an algorithm that, given as input an $n$-point metric space $(X, \dist)$ with aspect ratio $\Delta$, computes in $2^{O(\ddim(X))} n \log \Delta$ time a number $t > 0$, such that $c_1 \ddim(X) \leq t \leq c_2 \ddim(X)$, where $c_1$ and $c_2$ are absolute constants.
\end{restatable}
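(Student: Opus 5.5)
We estimate the doubling dimension directly from the net hierarchy $\nets$ of \Cref{lemma:net_hierarchy_construction}. There is a circularity: the construction time $2^{O(\ddim)} n\log\Delta$ depends on $\ddim$, but the construction itself never needs to know $\ddim$. So we first build $\nets$ without any estimate. The algorithm then outputs
\[
t := \log_2 \max_{\ell}\max_{u\in N_\ell} |\child(u)|,
\]
the logarithm of the maximum number of children of any net point. Computing $t$ takes time linear in the size of the data structure, hence $2^{O(\ddim)} n\log\Delta$ in total.

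For the upper bound $t\le c_2\ddim(X)$, we use \Cref{lemma:num_children_parent}. Its proof shows that $\child(u)$ is $2^{\ell-4}$-packing inside $B(u,2^{\ell+2})$, so by \Cref{lemma:packing} we get $|\child(u)|\le 2^{O(\ddim)}$. Here the hidden constant is absolute, coming from a packing ratio of $2^{6}$ or so. One subtlety is that \Cref{lemma:packing} is stated in terms of $\ddim(S)$ for the packed set $S$. Since $\ddim(S)\le 2\ddim(X)$ for subsets, this costs only a constant factor.

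For the lower bound $t\ge c_1\ddim(X)$, we must show that if every net point has at most $2^t$ children, then every ball $B(x,r)$ can be covered by $2^{O(t)}$ balls of radius $r/2$. This is the main obstacle, since the children are defined only relative to net points, not arbitrary centres and radii. The plan is as follows.
\begin{enumerate}
\item Choose $\ell$ with $2^{\ell}\approx r$, say $2^{\ell-1}< r\le 2^\ell$ (boundary radii below $1$ or above $\Delta$ are trivial).
\item By the $2^{\ell+1}$-covering property, $B(x,r)$ lies in $B(u,2^{\ell+2})$ for some $u\in N_{\ell+2}$. The ball $B(u,2^{\ell+2})$ is covered by the descendants of $u$ a constant number $c$ of levels down, by \Cref{lemma:descendants_packing_covering}\eqref{it:desc_covering}.
\item These descendants form a set of size at most $2^{ct}$, by iterating the child bound. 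Each of them carries a ball of radius $2^{\ell+2-c-2}\le r/2$ once $c$ is a fixed constant such as $6$.
\end{enumerate}
This yields a cover of $B(x,r)$ by $2^{O(t)}$ balls of radius $r/2$, so $\ddim(X)\le O(t)$, which is the bound $t\ge c_1\ddim(X)$.

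The remaining details are routine. The covering balls must be centred at points of $X$, which holds because net points lie in $X$. The number of descendants at depth $c$ is at most $(2^t)^c$, since the descendants at each level are unions of child sets. If $t$ evaluates to $0$, as for tiny instances, we output $\max(t,1)$; constant factors can absorb this.
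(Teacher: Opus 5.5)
Your proposal is correct and follows the paper's proof essentially step for step. It uses the same estimator $t=\max_u \log|\child(u)|$, the same upper bound via \Cref{lemma:num_children_parent}, and the same lower bound: cover $B(x,r)$ by a ball around a nearby $u\in N_{\ell+2}$, then cover that ball by balls around the descendants of $u$ a constant number of levels down. The paper takes $c=2$ (descendants at level $\ell$, radius $2^{\ell-2}<r/2$, at most $2^{2t}$ of them, so $\ddim\le 2t$); this choice is worth making explicit, because with your example $c=6$ the target level $\ell-4$ is negative for radii $r\le 8$ and would need separate, though easy, handling.
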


\begin{proof}
    Our approximation algorithm is rather simple:
    We first construct the net hierarchy $\nets = \{N_0, N_1, \dots, N_L\}$ over $X$ by \Cref{lemma:net_hierarchy_construction}. 
    (Note that the construction of $\nets$ in the proof of \Cref{lemma:net_hierarchy_construction} %
    does not require knowing $\ddim(X)$ in advance.)
    We then output 
    \begin{align*}
        t = \max_{u \in \bigcup_{\ell = 0}^L N_\ell} \log |\child(u)|
    \end{align*}
    as an approximation of $\ddim$.

    The running time follows directly from \Cref{lemma:net_hierarchy_construction}.
    It then remains to prove that $t = \Theta(\ddim)$.
    For the upper bound, by \Cref{lemma:num_children_parent}, we have that for each net point $u$, $|\child(u)| \leq 2^{O(\ddim)}$, so $t = O(\ddim)$.

    For the lower bound, fix an arbitrary point $x \in X$ and a radius $r \in [1, \Delta]$, we show that the ball $B(x, r)$ can be covered by at most $2^{O(t)}$ balls of half the radius, implying the desired bound $\ddim = O(t)$.
    In fact, let $\ell$ be such that $2^{\ell - 1} < r \leq 2^\ell$;
    let $u \in N_{\ell + 2}$ be the closest net point to $x$ in $N_{\ell + 2}$.
    By \eqref{eqn:N_l_properties}, $\dist(x, u) \leq 2^{\ell}$, so 
    \[B(x, r) \subseteq B(u, 2^{\ell + 1}).\]
    Moreover, by \Cref{lemma:descendants_packing_covering} \eqref{it:desc_covering}, 
    \[
    B(u, 2^{\ell + 1}) \subseteq \bigcup_{v \in \desc_{\ell}(u)} B(v, 2^{\ell - 2})
    \subseteq \bigcup_{v \in \desc_{\ell}(u)} B(v, r/2).
    \]
    Therefore, $B(x, r)$ can be covered by $|\desc_\ell(u)| \leq 2^{2t}$ balls of half the radius.
    Hence, $\ddim \leq 2 t$ as $x$ and $r$ are arbitrarily chosen.
\end{proof}

\subsection{Bounding the Aspect Ratio}
\label{sec:bounded_spread}

\begin{restatable}{lemma}{lemmaboundedspread}\label{lemma:bounded_spread}
    Given an $n$-point metric space $(X, \dist)$ with doubling dimension $\ddim$ and $\epsilon \in (0, 1)$, one can compute in $2^{O(\ddim)} n \log(n/\epsilon)$ time a subset $X' \subseteq X$, such that
    \begin{enumerate}[(1)]
        \item The aspect ratio of $X'$ is $O(n^{10}/\epsilon)$.
        \item Given any $(1 + \epsilon)$-approximate MST of $X'$, one can reconstruct in $O(n)$ time a $(1 + 2 \epsilon)$-approximate MST of $X$.
    \end{enumerate}
\end{restatable}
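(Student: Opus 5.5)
The plan is the standard ``merge tiny edges'' reduction. First I compute a crude estimate $W$ of $\MST(X)$ with $W \le \MST(X) \le c\,n\,W$ for some $c = 2^{O(\ddim)}$. A convenient choice comes from the net hierarchy of \Cref{lemma:net_hierarchy_construction}: let $\ell_0$ be the smallest level with $|N_{\ell_0}| = 1$, and set $W := 2^{\ell_0-4}$.

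To see that this is sandwiched correctly, note two facts. Since $|N_{\ell_0-1}|\ge 2$, there are two points at distance $\ge 2^{\ell_0-4}$, so $\MST(X)\ge \diam(X)\ge W$. Since $N_{\ell_0}$ is a single point and is $2^{\ell_0-2}$-covering, $\diam(X)\le 2^{\ell_0-1}$, so $\MST(X)\le (n-1)\diam(X)\le 8nW$.

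There is a wrinkle: building the hierarchy needs bounded aspect ratio, which is what we are trying to obtain. I would therefore avoid the hierarchy for this step. Instead I pick any point $x_0$, compute $D := \max_x \dist(x_0,x)$ in $O(n)$ time, and use $D \le \diam(X) \le 2D$. Then $D\le \MST(X)\le 2nD$.

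Next, set the threshold $\delta := \epsilon D/n^{2}$. I cluster $X$ so that any two points at distance $\le \delta$ end up in the same cluster, while every cluster has diameter $\le n\delta$. The target clusters are the connected components of the threshold graph with edges of length $\le\delta$; each such component has at most $n$ points, so its diameter is at most $(n-1)\delta$.

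Computing these components quickly is the main obstacle, since the threshold graph can be dense. I would first try a greedy $\delta$-net built by grid-free bucketing on the net hierarchy restricted to scales between $\delta$ and $2D$. That range has only $O(\log(n/\epsilon))$ levels, so the top-down construction of \Cref{lemma:net_hierarchy_construction} run from scale $2D$ down to scale $\approx\delta$ costs $2^{O(\ddim)} n\log(n/\epsilon)$. In fact I only need a $\delta$-cover, not exact components. Each point is assigned to its level-$\ell_\delta$ net point, where $2^{\ell_\delta}\approx\delta$. Then $X'$ is taken to be the net $N_{\ell_\delta}$. Its minimum distance is $\ge 2^{\ell_\delta-3}=\Omega(\delta)$ and its diameter is $\le 2D$, so its aspect ratio is $O(n^2/\epsilon)\le O(n^{10}/\epsilon)$.

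For reconstruction, take a $(1+\epsilon)$-approximate MST $T'$ of $X'$ and attach each removed point $x$ by a single edge to its assigned net point. This is a spanning tree, and building it takes $O(n)$ time given the stored assignment. Its weight is at most $w(T') + n\cdot O(\delta) = w(T') + O(\epsilon D/n)$.

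It remains to compare $\MST(X')$ with $\MST(X)$. Take the MST of $X$ and map every point to its assigned net point. This yields a connected multigraph on $X'$, and each edge grows by at most $2\cdot O(\delta)$. Hence $\MST(X')\le \MST(X)+O(n\delta)=\MST(X)+O(\epsilon D/n)$.

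Combining the two bounds, the reconstructed tree weighs at most $(1+\epsilon)(\MST(X)+O(\epsilon D/n))+O(\epsilon D/n)$. Since $D\le\MST(X)$, this is $(1+\epsilon+O(\epsilon/n))\MST(X)\le(1+2\epsilon)\MST(X)$ for $n$ large. Small $n$ can be solved exactly in $O(1)$ time, or handled by choosing the constant in $\delta$ small enough.
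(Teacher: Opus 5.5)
Your proposal is correct and follows essentially the paper's route. You build only the top $O(\log(n/\epsilon))$ levels of the net hierarchy, down to a scale of order $\epsilon\cdot\diam(X)/\mathrm{poly}(n)$, and take $X'$ to be that net. You then snap the MST of $X$ onto group centers to bound $\MST(X')$, reattach removed points to their centers, and charge the $O(n\cdot\text{scale})$ additive error to $\MST(X)\ge\diam(X)$.

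The differences are minor improvements. You use scale $\epsilon D/n^2$, which gives aspect ratio $O(n^2/\epsilon)$. You also compute $D$ as an $O(n)$-time $2$-approximation of the diameter, which the paper tacitly assumes is known. Finally, your ``$n$ large'' caveat is unnecessary: choosing $2^{\ell_\delta}\le\delta$ makes the additive error at most $\tfrac54 n\delta=\tfrac54\epsilon D/n\le\epsilon\MST(X)$ for every $n\ge 2$.
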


\begin{proof}
    Assume the interpoint distance of $X$ is $1$ and the diameter of $X$ is $\Delta$.
    The subset $X'$ is constructed as follows:

    Let $\ell_0$ be the largest integer $\ell$ with $2^{\ell - 3} \leq \epsilon \Delta / n^{10}$.
    We run the algorithm in \Cref{sec:appendix:net-hierarchy} to construct the highest $L - \ell_0 + 1$ levels of the net hierarchy, i.e., $N_L, N_{L - 1}, \dots, N_{\ell_0}$, and maintain the grouping $\{G(u)\}_{u \in N_{\ell_0}}$ at level $\ell_0$.
    Define $X' := N_{\ell_0}$.

    By the construction in the proof of \Cref{lemma:net_hierarchy_construction} %
    computing each level takes $2^{O(\ddim)} n$ time.
    Since we have $L - \ell_0 + 1 = O(\log(n/\epsilon))$ levels, the time complexity of constructing $X'$ is $2^{O(\ddim)} n \log(n/\epsilon)$.

    Property (1) follows directly from the construction, as $\diam(X') = \Delta$ and the minimum interpoint distance of $X'$, by \Cref{lemma:net_hierarchy_construction}, is at least $2^{\ell_0 - 3} \leq \epsilon \Delta / n^{10}$.

    For property (2), 
    we first show that 
    \begin{align}
        \MST(X') \leq \MST(X) + \frac{4 \epsilon \Delta}{n^9}.
        \label{eqn:MST_change}
    \end{align}
    Indeed, let $T^*$ be the MST of $X$;
    we construct a spanning tree $T'$ of $X'$ as follows: 
    For every edge $e = (x, y) \in T^*$, let $u$ and $v$ be points in $X' = N_{\ell_0}$ such that $x \in G(u)$ and $y \in G(v)$; add the edge $e' = (u, v)$ to $T'$.
    Clearly $T'$ is connected in $X'$, as each path in $T^*$ has a corresponding path in $T'$.
    Thus, after removing cycles and duplicated edges, $T'$ becomes a spanning tree of $X'$.
    On the other hand, recall the definition of grouping that $u$ (resp. $v$) is the closest point to $x$ (resp. $y$) in $N_{\ell_0}$, so $\dist(x, u) \leq 2^{\ell_0 - 2} \leq 2 \epsilon \Delta / n^{10}$ (resp. $\dist(y, v) \leq 2 \epsilon \Delta / n^{10}$).
    Therefore, for each edge $e = (x, y) \in T^*$, 
    \begin{align*}
        w(e') - w(e) = \dist(u, v) - \dist(x, y) \leq \dist(x, u) + \dist(y, v)
        \leq \frac{4 \epsilon \Delta}{n^{10}}.
    \end{align*}
    We hence have 
    \begin{align*}
        w(T') - w(T^*)
        \leq \sum_{e \in T^*} \frac{4 \epsilon \Delta}{n^{10}}
        \leq \frac{4 \epsilon \Delta}{n^9},
    \end{align*}
    completing the proof of \eqref{eqn:MST_change}.

    Given an arbitrary $(1 + \epsilon)$-approximate MST $T'$ of $X'$, we can construct a spanning tree $T$ of $X$ in $O(n)$ time as follows:
    For each $x \in X \setminus X'$, find $u \in X' = N_{\ell_0}$ such that $x \in G(u)$ and connect the edge $(x, u)$.
    Moreover,
    \begin{align*}
        w(T) &= w(T') + \sum_{x \in X \setminus X'} \dist(x, X') \\
        &\leq (1 + \epsilon) \MST(X') + n \cdot \frac{2 \epsilon \Delta}{n^{10}}  \\
        &\leq (1 + \epsilon) (\MST(X) + \frac{4 \epsilon \Delta}{n^9}) + \frac{2\epsilon \Delta}{n^9} &&\text{by \eqref{eqn:MST_change}} \\
        &\leq (1 + \epsilon) \MST(X) + \frac{10 \epsilon \Delta}{n^9} \\
        &\leq (1 + 2 \epsilon) \MST(X) &&\text{since } \MST(X) \geq \Delta \text{ and } n \geq 2.
    \end{align*}
    This concludes the proof of property (2).
\end{proof}

\end{document}